\pgfplotsset{compat=1.17}
\tikzstyle{timeout line} = [mark=none,black,dashed,samples=2,domain=20:2000]
\tikzstyle{timeout node} = [pos=0.2,yshift=-5pt,font=\it]
\renewcommand{\new}[1]{#1}
\begin{document}

\title{Modular Control Plane Verification via Temporal Invariants}

\author{Timothy Alberdingk Thijm}
\orcid{0000-0003-1758-5917}
\affiliation{%
    \institution{Princeton University}
    \city{Princeton}
    \state{NJ}
    \country{United States}
}
\email{tthijm@cs.princeton.edu}
\author{Ryan Beckett}
\orcid{0000-0001-7844-2026}
\affiliation{%
    \institution{Microsoft Research}
    \city{Redmond}
    \state{WA}
    \country{United States}
}
\email{ryan.beckett@microsoft.com}
\author{Aarti Gupta}
\orcid{0000-0001-6676-9400}
\affiliation{%
    \institution{Princeton University}
    \city{Princeton}
    \state{NJ}
    \country{United States}
}
\email{aartig@cs.princeton.edu}
\author{David Walker}
\orcid{0000-0003-3681-149X}
\affiliation{%
    \institution{Princeton University}
    \city{Princeton}
    \state{NJ}
    \country{United States}
}
\email{dpw@cs.princeton.edu}

\begin{CCSXML}
<ccs2012>
   <concept>
       <concept_id>10003033.10003039.10003041.10003042</concept_id>
       <concept_desc>Networks~Protocol testing and verification</concept_desc>
       <concept_significance>500</concept_significance>
       </concept>
   <concept>
       <concept_id>10003033.10003039.10003041.10003043</concept_id>
       <concept_desc>Networks~Formal specifications</concept_desc>
       <concept_significance>500</concept_significance>
       </concept>
   <concept>
       <concept_id>10003752.10003790.10011192</concept_id>
       <concept_desc>Theory of computation~Verification by model checking</concept_desc>
       <concept_significance>500</concept_significance>
       </concept>
   <concept>
       <concept_id>10003752.10003790.10003794</concept_id>
       <concept_desc>Theory of computation~Automated reasoning</concept_desc>
       <concept_significance>500</concept_significance>
       </concept>
 </ccs2012>
\end{CCSXML}

\ccsdesc[500]{Networks~Protocol testing and verification}
\ccsdesc[500]{Networks~Formal specifications}
\ccsdesc[500]{Theory of computation~Verification by model checking}
\ccsdesc[500]{Theory of computation~Automated reasoning}

\keywords{formal network verification, compositional reasoning, modular verification}

\begin{abstract}
  Monolithic control plane verification cannot scale to
  hyperscale network architectures with tens of thousands of nodes,
  heterogeneous network policies and thousands of network changes a day.
  Instead, \emph{modular verification} offers improved scalability,
  reasoning over diverse behaviors, and robustness following policy updates.
  We introduce \sysname{}, a new modular control plane verification system.
  While one class of verifiers, starting with Minesweeper, 
  were based on analysis of stable paths, we show that such models, when
  deployed naïvely for modular verification, are unsound.
  To rectify the situation, we adopt a
  routing model based around a logical notion of time and
  develop a sound, expressive, and scalable verification engine.

  Our system requires that a user specifies interfaces between module components.
  We develop methods for defining these interfaces using predicates inspired by temporal logic,
  and show how to use those interfaces to verify a range of network-wide properties
  such as reachability or access control.
  Verifying a prefix-filtering policy using a non-modular verification engine times
  out on an 80-node fattree network after 2 hours.
  However, \sysname{} verifies a 2,000-node fattree in 2.37 minutes on a 96-core virtual machine.
  Modular verification of individual routers is embarrassingly parallel and completes in seconds,
  which allows verification to scale beyond non-modular engines,
  while still allowing the full power of SMT-based symbolic reasoning.
\end{abstract}

\maketitle
\section{Introduction}
\label{sec:intro}

Major cloud providers are seeing sustained financial growth in response
to mounting demand for reliable networking~\cite{cloud-market-growth}.
This demand suggests a commensurate network \emph{infrastructure growth} will take place to
accommodate more and more users.
These networks can already have hundreds of data centers, each with hundreds of thousands of devices
running thousands of heterogeneous policies,
and receiving thousands of updates a day~\cite{secguru-slides}.
Network operators program this infrastructure using distributed routing protocols, where each router
in a network may run thousands of lines of configuration code.
Despite operators' care,
routine configuration updates have inadvertently rendered routers unreachable~\cite{cloudflare}
or violated isolation requirements that prevent flooding~\cite{rogers-register}.

To prevent costly errors, operators can use \emph{control plane verification} to
analyze their networks~\cite{arc,era,minesweeper,fastplane,bonsai,tiramisu,shapeshifter,plankton,hoyan,bagpipe}.
Until recently, research has focused on \emph{monolithic} verification of the entire network at once,
which is infeasible for large cloud provider networks.
Such networks demand \emph{modular} techniques that divide the network into components to verify in isolation.
This approach has proven successful for software verification~\cite{henzinger1998you,flanagan2003thread,grumberg1994model,alur1999reactive,giannakopoulou2018compositional}
and network data plane verification~\cite{secguru}.
We annotate the \emph{interfaces} between network components with \emph{invariants}
that describe each component's routing behavior.
Given the interfaces of a component's neighbors, we can
verify that the component respects its own interface.
When the interfaces imply a useful property, \EG reachability or access control,
we can conclude that the monolithic network satisfies that property.

We propose \sysname{}, the first modular technique with
\emph{abstract network interfaces to verify a wide range of properties} (including route reachability).
Kirigami~\cite{kirigami} proposed an architecture for modular control plane verification,
but restricted its interfaces to only \emph{exact} routes.
Lightyear~\cite{lightyear} presented an alternative verification technique with more expressive interfaces,
but can only check that a network never receives a route (\EG for access control properties)
--- it cannot check reachability, a keen property of interest.

\paragraph{A temporal model}
The basis of \sysname{}'s approach is a \emph{temporal model of network execution},
where we reason over the states of nodes \emph{at all times}.
This model came as a surprise to us: one branch of prior work, starting with Minesweeper~\cite{minesweeper},
sought to avoid the burden of reasoning over all transient states of the network
by focusing on the \emph{stable states} of the routing protocol once routing converges.
Unfortunately, a naïve combination
of modular reasoning and Minesweeper-style
analysis of stable states \emph{is unsound}.
We discovered that the best way to recover soundness, while maintaining the system's generality,
is to move to a temporal model.

\begin{wrapfigure}{R}{6cm}
	\centering
	\begin{tikzpicture}
		\begin{axis}[
				thick,
				grid=major,
				height=4cm,
				width=5.5cm,
				xlabel=Topology Size (Nodes),
				ylabel={Verification time},
				scaled ticks=false,
				y unit=\si{\second},
				cycle list name=cute,
				legend style={
						at={(0.98,0.75)},
						anchor=north east,
					}
			]
			\addplot table[x expr=\thisrow{k}^2*1.2, y expr=\thisrow{tk}/1000]{\htbl};
			\addlegendentry{Modular}
			\addplot table[x expr=\thisrow{k}^2*1.2, y expr=\thisrow{ms}/1000]{\htbl};
			\addlegendentry{Monolithic}
			\addplot[timeout line] {7200.0} node [timeout node] {timeout};
		\end{axis}
	\end{tikzpicture}
	\caption{Verification time comparison between \sysname{} and Minesweeper-style verification.}
	\label{fig:hijack-preview}
	\Description{A graph with two trend lines, one for modular verification in blue and another for monolithic verification in orange.
      The x-axis is the size of a network topology in terms of nodes, and ranges from 20 to 2000 nodes.
      The y-axis is the time taken by verification in seconds for the benchmark, and ranges from 0 to 8000 seconds.
      There are data points at gradually increasing intervals, starting at 20 nodes, and then going to 80, 180, 320, 500,
      720, 980, 1280, 1620 and finally 2000 nodes.
      The modular verification line grows very gradually with the size of the network and hugs close to 0 on the y-axis.
	  The monolithic verification line quickly reaches the 2-hour (7200 second) timeout at 80 nodes.}
\end{wrapfigure}
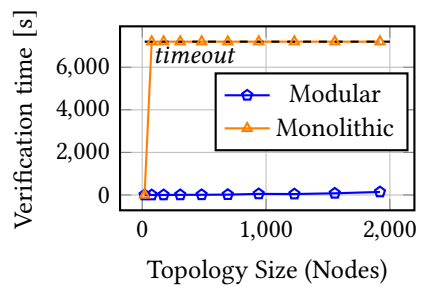

This temporal model appears to ask the verification engine to do a lot more work:
the system must verify that all the messages produced
\emph{at all times} are consistent with a user-supplied
interface for each network component.
Nevertheless, because reasoning is modular, ensuring individual problems are small,
the system scales with the size of the largest \emph{component} rather than the size of the network.
This modular reasoning is general and any symbolic method
(\EG symbolic simulation, model checking) could use it to verify individual components.
We use a Satisfiability Modulo Theories (SMT)-based method in this work~\cite{smt}.
As a preview of modularity's benefits, Figure~\ref{fig:hijack-preview} shows the time
it takes \sysname{} to verify connectivity for
variable-sized fattree topologies~\cite{fattree} with external route announcements
using the eBGP routing protocol, compared with a Minesweeper-style network-wide stable paths encoding.

\sysname{} does require more work of users than monolithic, non-modular systems:
users must supply interfaces that characterize the routes
each network component may generate at each time.
Still, these interfaces, once constructed, provide the
typical benefits of interfaces in any software engineering context.
First, they localize exactly where an error occurs:
if a component is not consistent with its interface,
then one must search \emph{only} that component for the mistake,
and a counterexample from the SMT solver can help pinpoint it.
Second, router configurations change rapidly, and these changes are often
the source of network-wide problems~\cite{dna}.
Well-defined interfaces will be stable over time.
As users update their configurations, they may easily recheck them against
the stable, local interface for problems.

Inspired by temporal logic~\cite{pnueli-modular}, we
developed a simple language to help users specify their interfaces.
Through this language, users may state that they expect to see certain sets of routes
\emph{always}, \emph{eventually} (by some specified time $t$, to be more precise),
or \emph{until} (some approximate specified time).
Moreover, the interface language allows users to write abstract specifications that
need not characterize irrelevant features of routes, and instead
only what is necessary to prove a desired property.
For instance, a user might specify a reachability property simply by
stating a node must ``eventually receive some route \new{at time $t$},'' without saying
which route it must receive.
\new{Our formal model is based on a synchronous semantics of time,
  where nodes receive updates in lockstep.
  As discussed in prior work~\cite{daggitt2018asynchronous},
  this simplifies reasoning over the routing behavior of networks
  which converge to unique solutions.
  One may extend this model to consider a bounded number of steps of \emph{delay}
  at the cost of increasing the complexity of our invariants.}

To summarize, the key contributions of this paper are:
\begin{itemize}
	\item We demonstrate in depth why a natural, but naïve
	      modular control plane analysis based on an analysis of
	      stable states is unsound (\S\ref{sec:overview}).
	\item We develop a new theory for modular control plane analysis
	      based on time (\S\ref{sec:theory}). We prove it sound with respect
        to \new{the semantics of a network simulator},
        \new{and complete with respect to the closed network semantics
        (starting from fixed initial values)}.
        This theory is general, and can verify individual components
        using any verification method.
  \item We define an SMT-based verification procedure to reason
        about all possible routes at all times,
        \new{which can analyze networks with symbolic representations of, \EG
        external announcements or destination routers.} (\S\ref{sec:algorithms})
	\item We design and implement a new, modular control plane
	      verification tool, \sysname{}, based
	      directly on this procedure (\S\ref{sec:implementation}).
        We evaluate \sysname{} and check a variety of policies
	      at individual nodes in hundreds of milliseconds.
	      Thanks to its embarrassingly parallel modular procedure,
	      \sysname{} scales to networks with thousands of nodes (\S\ref{sec:evaluation}).
\end{itemize}


\section{Key Ideas}
\label{sec:overview}

This section introduces the stable routing model of network control
planes, which serves as a foundation for many past network verification tools~\cite{minesweeper,bonsai,shapeshifter,plankton}.
It illustrates in depth why naïvely adopting this model for modular verification
is unsound.
It then introduces a new temporal model for control
plane verification and provides the intuition for why the revised
model is superior.
This section is long but contains a substantial payoff:
the essence of why a sound and general modular control plane analysis
should be based off a temporal model of control plane behavior.

\subsection{Background}
\label{sec:background}
To determine how to deliver traffic between two endpoints, routers
(also called nodes) run distributed
routing protocols such as BGP~\cite{bgp}, OSPF~\cite{ospf}, RIP~\cite{rip}, or ISIS~\cite{isis}.
Each node participating in a
protocol receives \emph{messages} (also called \emph{routes}) from its neighboring nodes.
After receiving routes from its neighbors, a node will select its
``best'' route---the route it will use to forward traffic.
Different protocols use different metrics to compare routes and select
the best among those received.
For instance, RIP compares hop count; OSPF uses the shortest weighted-length path;
and BGP uses a complex, user-configurable combination of metrics.
Finally, each router sends its chosen route to its
neighbors, possibly modifying the route along the way
(for instance, by prepending its identifier to the path represented by the route).

\paragraph{Routing algebras}
Routing algebras~\cite{metarouting,routingalgebra} are abstract models that capture the similarities between
different distributed routing protocols.
Prior work on control plane verification~\cite{stable-paths,minesweeper,nv}
uses similar abstract models to formalize route computation.
We adopt this standard abstract model of routing protocols, which specifies the following components.
\begin{itemize}
	\item A directed graph $G$ that defines the network topology's nodes ($V$) and edges ($E$).
	      We use lowercase
	      letters ($u$, $v$, $w$, \ETC) for nodes and pairs ($uv$) to indicate directed edges.
	\item A set $S$ of \emph{routes} that communicate routing information between nodes.
        \new{Routes abstract the routing announcements and metarouting information used in different routing
        protocols.
        Depending on the problem under consideration, $S$ may be the set of Booleans $\mathbb{B}$
        or natural numbers $\mathbb{N}$ (\EG checking reachability or path length properties),
        a set of flags (\EG checking access control), or a record with multiple fields (more complex policies and/or properties).}
	\item An initialization function $\init{}$ that
	      provides an initial route $\init{v} \in S$ for each node $v$.
	\item \new{A function $\TT$ that maps edges to transfer functions.}
	      Each transfer function $\TT(e) = \T_{e}$ transforms routes as they traverse the edge $e$.
	\item A binary \new{associative and commutative} function $\M$ (\AKA \emph{merge} or
	      the \emph{selection function})
	      selects the best route between two options.
\end{itemize}

\paragraph{An idealized example}
Many large cloud providers deploy data center networks to scale
up their compute capacity.
They connect those data centers to each other and the rest of the Internet
via a wide-area network (WAN).
To illustrate the challenges of modular network verification, we
will explore verification of an idealized cloud provider network
with WAN and data center components.
Figure~\ref{fig:cloud-example} presents a highly abstracted view of our network's topology.
The data center network contains
routers \fn{d} and \fn{e} where \fn{d} connects to the corporate
WAN and \fn{e} connects to data center servers.
The WAN consists of routers \fn{w} and \fn{v}.
Router \fn{v} connects
to the data center as well as to a neighboring network \fn{n},
which is not controlled by our cloud provider.%
\footnote{Any of the edges could be
	bi-directional, allowing routes to pass in both directions,
	but for pedagogic reasons we strip down the example to the barest
	minimum, retaining edges that flow from left-to-right except
	for at \fn{v} and \fn{d} where routes may flow back and forth.}

\begin{figure}
  \centering
  \begin{minipage}{0.5\linewidth}
	\begin{tikzpicture}
		\Vertex[x=1.6,y=.3,fontsize=\normalsize,size=2.5,RGB,color={250,250,250}]{WAN}
		\Vertex[x=4.5,y=.3,fontsize=\normalsize,size=2.5,RGB,color={250,250,250}]{DC}
		\Vertex[x=0,y=-.5,label=$n$,fontsize=\normalsize,size=.6,color=nodeorange]{N}
		\Vertex[x=1.1,y=.6,label=$w$,fontsize=\normalsize,size=.6]{W}
		\Vertex[x=2.3,label=$v$,fontsize=\normalsize,size=.6]{V}
		\Vertex[x=3.7,label=$d$,fontsize=\normalsize,size=.6]{D}
		\Vertex[x=5.3,label=$e$,fontsize=\normalsize,size=.6]{E}
		\Text[x=0,y=-1.2,fontsize=\small]{\textbf{Neighbor}}
		\Text[x=1.7,y=-1.2,fontsize=\small]{\textbf{WAN}}
		\Text[x=4.6,y=-1.2,fontsize=\small]{\textbf{Data Center}}
		\Edge[Direct,NotInBG,label=filter,fontsize=\footnotesize](N)(V)
		\Edge[Direct,NotInBG,label=tag,fontsize=\footnotesize](W)(V)
		\Edge[Direct,NotInBG,bend=20](V)(D)
		\Edge[Direct,NotInBG,bend=20](D)(V)
		\Edge[Direct,NotInBG,label=allow,fontsize=\footnotesize](D)(E)
	\end{tikzpicture}
  \end{minipage}
  \begin{minipage}{0.4\linewidth}
    \centering
    {\textbf{Routing policies:}}
    \begin{align*}
      \mathsf{filter}&: \text{drop all routes} \\
      \mathsf{tag}&: \text{tag routes internal} \\
      \mathsf{allow}&: \text{drop external routes}
    \end{align*}
  \end{minipage}
  \caption{Our idealized example cloud provider network.}
  \label{fig:cloud-example}
  \Description{An image representing a cloud provider network as a labelled graph.
    Five nodes are visible:
    an external neighbor \fn{n}, two nodes \fn{w} and \fn{v} in a wide-area network,
    and two nodes \fn{d} and \fn{e} in a data center.
    Links are shown with directed edges.
    External neighbor node \fn{n} has an edge to WAN node \fn{v}, which filters all routes;
    WAN node \fn{w} also has an edge to node \fn{v}, which tags routes from \fn{w} as internal;
    node \fn{v} has an edge to data center node \fn{d};
    and finally node \fn{d} has two edges, one back to \node v and one to data center \node e.
    The \fn{de} edge drops routes that are not tagged as internal.}
\end{figure}

The default routing policy uses shortest-paths.
However, in addition,
the network administrators want \fn{e}
to be reachable from all cloud-provider-owned
devices (\IE \fn{w}, \fn{v}, \fn{d}), but not to be reachable from
outsiders (\IE \fn{n}).
They intend to enforce this property by tagging all routes originating
from their network ($w$) as ``internal'' (\EG using BGP community tags~\cite{bgp-communities}) and allowing
those routes to traverse the \fn{de} edge.
Doing so should allow \fn{e} to communicate with internal machines
but not external machines.
Furthermore, to protect nodes from outside interference,
the cloud provider applies route filters to external peers to drop
erroneous advertised routes that may ``hijack''~\cite{rcc} internal routing.

\paragraph{Modelling the example}
To model our example network,
we define the network topology as the graph $G$ pictured earlier.  We assume all
routers participate in an idealized variant of eBGP~\cite{bgp}, which is commonly used in both wide-area networks and data centers~\cite{bgp-in-dc}.
\new{We abstract away some of the fields of eBGP routing announcements to define the set of routes $S$ as
records with 3 fields:} 
\begin{enumerate*}[label=\emph{(\roman*)}]
	\item an integer ``local preference'' that lets users
	overwrite default preferences,
	\item an integer path length, and
	\item a boolean tag field that is set to \TR{} if a route comes
	from an internal source and false otherwise.
\end{enumerate*}
$S$ also includes \nullm{}, a message that indicates \emph{absence} of a route.

Let's consider what happens when starting with a specific route at WAN node \fn{w},
$\Rec{100}{0}{\FL}$
(local preference 100, path length of 0, not tagged internal).
The $\init{}$ function assigns \fn{w} that route,
and assigns the \nullm{} route to all other nodes.

The transfer function $\T_{e}$ increments the length field of every route by one across every edge $e$.
In addition, edge \fn{wv} sets the internal tag field to \TR{} and edge \fn{nv} drops all routes
(transforms them into $\nullm{}$).
Finally, edge \fn{de} drops all routes not tagged internal/true.

The merge function $\M$ always prefers some route
over the \nullm{} route, and prefers routes with
higher local preference over lower local preference.  If the local
preference is the same, it chooses a route with a shorter path length.
\M{} ignores the tag field.
For example, \M{} operates as follows:
\[
	\begin{array}{ccccc}
		\Rec{100}{2}{\FL} & \M & \nullm{}          & = & \Rec{100}{2}{\FL} \\
		\Rec{100}{2}{\FL} & \M & \Rec{200}{5}{\TR} & = & \Rec{200}{5}{\TR} \\
		\Rec{200}{2}{\FL} & \M & \Rec{200}{5}{\TR} & = & \Rec{200}{2}{\FL} \\
	\end{array}
\]

\paragraph{Network simulation}
A \emph{state} of a network is a mapping from nodes to the ``best routes'' they have computed so far.
One may simulate a network by starting in its initial state and
repeatedly computing new states (\IE new ``best routes'' for particular nodes).
Well-behaved networks eventually converge to \emph{stable states}
where no node can compute a better route, given the routes
provided by its neighbors.

To compute a new best route at a particular node, say \fn{v},
we apply the \T{} function to each best route
computed so far at its neighbors \fn{w}, \fn{n}, and \fn{d}, and then select
the best route among the results and the initial value at \fn{v},
using the merge ($\M$) function.  More precisely:
\begin{equation*}
	\mathit{v_{new}} =
	\T_{wv} (\mathit{w_{old}}) \M
	\T_{nv} (\mathit{n_{old}}) \M
	\T_{dv} (\mathit{d_{old}}) \M
	\init{v}
\end{equation*}

The table in Figure~\ref{fig:simulation} presents an example
simulation.  At each time step, all nodes compute their best route
given the routes sent by their neighbors at the previous time step.
\new{Our model assumes a synchronous time semantics for simplicity:
  this simulation is hence one possible asynchronous execution.}%
\footnote{
  \new{See \S\ref{sec:algorithms} for a discussion of how we can extend our model
  to consider networks with delay.}}
After time step 3, no node computes a better route---the system
has reached a \emph{stable state}. The picture in Figure~\ref{fig:simulation} annotates each node in the diagram
with the stable route it computes.

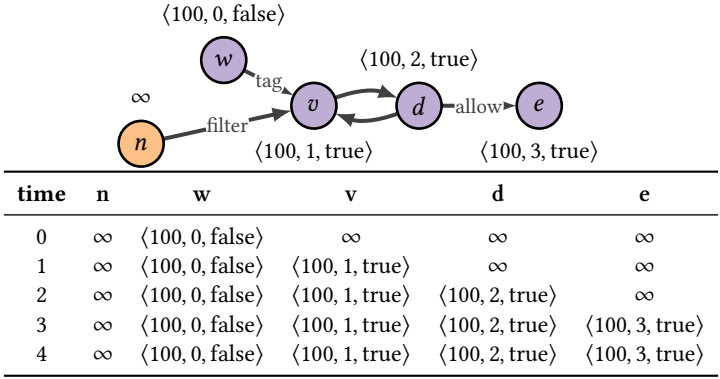
\begin{figure}
  \centering
    \begin{tikzpicture}
		\Vertex[x=0,y=-.5,label=$n$,fontsize=\normalsize,size=.6,color=nodeorange]{N}
		\Vertex[x=1.1,y=.6,label=$w$,fontsize=\normalsize,size=.6]{W}
		\Vertex[x=2.3,label=$v$,fontsize=\normalsize,size=.6]{V}
		\Vertex[x=3.7,label=$d$,fontsize=\normalsize,size=.6]{D}
		\Vertex[x=5.3,label=$e$,fontsize=\normalsize,size=.6]{E}
        \Text[x=0,y=.1,fontsize=\small]{$\nullm{}$}
        \Text[x=1.1,y=1.2,fontsize=\small]{$\Rec{100}{0}{\FL{}}$}
        \Text[x=2.3,y=-.6,fontsize=\small]{$\Rec{100}{1}{\TR{}}$}
        \Text[x=3.7,y=.6,fontsize=\small]{$\Rec{100}{2}{\TR{}}$}
        \Text[x=5.3,y=-.6,fontsize=\small]{$\Rec{100}{3}{\TR{}}$}
        \Edge[Direct,label=filter,fontsize=\footnotesize](N)(V)
        \Edge[Direct,label=tag,fontsize=\footnotesize](W)(V)
        \Edge[Direct,bend=20](V)(D)
        \Edge[Direct,bend=20](D)(V)
		\Edge[Direct,label=allow,fontsize=\footnotesize](D)(E)
    \end{tikzpicture}
    \small
    \begin{tabular}{ cccccc }
        \toprule
        \textbf{time} & $\mathbf{n}$ & $\mathbf{w}$                   & $\mathbf{v}$                  & $\mathbf{d}$                  & $\mathbf{e}$                  \\ \midrule
        $0$           & $\nullm{}$   & $\Rec{100}{0}{\FL{}}$ & $\nullm{}$                    & $\nullm{}$                    & $\nullm{}$                    \\
        $1$           & $\nullm{}$   & $\Rec{100}{0}{\FL{}}$ & $\Rec{100}{1}{\TR{}}$ & $\nullm{}$                    & $\nullm{}$                    \\
        $2$           & $\nullm{}$   & $\Rec{100}{0}{\FL{}}$ & $\Rec{100}{1}{\TR{}}$ & $\Rec{100}{2}{\TR{}}$ & $\nullm{}$                    \\
        $3$           & $\nullm{}$   & $\Rec{100}{0}{\FL{}}$ & $\Rec{100}{1}{\TR{}}$ & $\Rec{100}{2}{\TR{}}$ & $\Rec{100}{3}{\TR{}}$ \\
        $4$           & $\nullm{}$   & $\Rec{100}{0}{\FL{}}$ & $\Rec{100}{1}{\TR{}}$ & $\Rec{100}{2}{\TR{}}$ & $\Rec{100}{3}{\TR{}}$ \\
        \bottomrule
    \end{tabular}
    \normalsize
    \Description{A figure showing the running example network and its converged routes following simulation.
        Node \fn{n} starts with and retains the null route $\nullm{}$.
        Node \fn{w} starts with and retains the route $\Rec{100}{0}{\FL{}}$.
        Node \fn{v} starts with the null route $\nullm{}$ but then acquires a route $\Rec{100}{1}{\FL{}}$ from \fn{w} at time 1.
        Node \fn{d} acquires this route from \fn{v} at time 2, after starting with the null route $\nullm{}$.
        Finally, node \fn{e} acquires \fn{d}'s route at time 3, also after starting with the null route $\nullm{}$.
        At time 4, no node updates their route, so the network converged at the previous time step.}
	\caption{Simulation of the example network for a fixed set of initial routes.
		Node \fn{e} receives a route from \fn{d} since its route is tagged as internal, and the network stabilizes at time 3.}
	\label{fig:simulation}
\end{figure}

\paragraph{Network verification}
Since the edge from \fn{d} to \fn{e} only allows routes tagged internal,
$w$'s route would not reach \fn{e} if \fn{v} were to receive a better route from \fn{n}
(\EG if the route filter from $n$ was implemented incorrectly).
In other words, the simulation demonstrates
that the network correctly operates when \fn{n} sends no route ($\nullm{}$).
But what about other routes? Will $f_{nv}$ filter all routes from \fn{n} correctly?
SMT-based tools like Minesweeper~\cite{minesweeper} and Bagpipe~\cite{bagpipe} can answer such questions
by translating the routing problem into 
constraints for a Satisfiability Modulo Theory (SMT) solver to solve.
\new{An SMT-based encoding of our network could represent any possible external route announcement
  from \fn{n} by representing its initial value with a symbolic variable: the solver can then
  search for a concrete route captured by this variable
  that \emph{violates} our desired property,
  \IE a stable state where \fn{e} never receives a route from \fn{w}.}

\subsection{The Challenge of Modular Verification}
\label{sec:the-challenge}
A system for modular verification will partition a network
into components and verify each component separately, possibly in parallel.
However, since routes computed at a node in one component depend on the routes sent by nodes in neighboring components, each
component must make some assumptions about the routes produced by its neighbors.

\paragraph{Interfaces}
In our case, for simplicity (though this is not necessary),
we place every node in its own component and define
for it an \emph{interface} that attempts to \emph{overapproximate} \new{(or equal)}
the set of routes that the node might produce in a stable state.
The interface for the network as a whole is a function $\An$ from
nodes to sets of routes where $\An(x)$ is the interface for node $x$.

The person attempting to verify the network will supply these interfaces.
Of course, interfaces may be \emph{wrong}---that is, they might not include
some route computed by a simulation (and hence might not be a proper
overapproximation).
Indeed, when there are bugs in the network, the interfaces a user supplies
are likely to be wrong!
The user \emph{expects} the network to behave one way, producing a certain
set of routes, but the network behaves differently due to
an error in its configuration.
A sound modular verification procedure must detect such errors
\new{and indicate if we must strengthen the interface to prove the property}.
On the other hand, a useful modular verification
procedure should allow interfaces to overapproximate the routes
produced, when users find it convenient.
Overapproximations are sound for verifying properties over all routing behaviors
of a network,
and they often simplify reasoning,
allowing users to think more abstractly.

Throughout the paper, we use predicates $\predicate$ to define interfaces,
where $\predicate$ stands in for the set of routes $\{ s ~\vert~ s \in S, \predicate(s) \}$.
Returning to our running example,
one might define the interface for \fn{w}
using the predicate $s.\lp = 100 \wedge s.\len = 0 \wedge \neg s.\bgptag$.
Such an interface would include exactly the one route
generated by \fn{w} in our example: $\Rec{100}{0}{\FL}$.
However, path length is unimportant in the current context;
to avoid thinking about it, a user could instead provide a weaker interface
representing infinitely many possible routes,
such as $s.\lp = 100 \wedge \neg s.\bgptag$.
This interface relieves
the user of having to figure out the exact path length (not so hard
in this simple example, but potentially challenging in an
arbitrary wide-area network), and instead specifies only the
local preference and the tag.
In general, admitting overapproximations make
it possible for users to ignore any features of routing
that are not actually relevant for analyzing the properties of interest.

\paragraph{The strawperson verification procedure}
For a given node \fn{x}, the \emph{component centered at \fn{x}} is the subgraph
of the network that includes node \fn{x} and all edges that end at \fn{x}.
Given a network interface $\An$, our strawperson verification procedure (\SV) will
consider the component centered at each node \fn{x} independently.
Suppose a node \fn{x} has neighbors $n_1, \ldots, n_k$.
For that node \fn{x},
\SV checks that
\begin{equation}\label{eq:sv-check}
  \forall s_1 \in A(n_1), \ldots, \forall s_k \in A(n_k),\
	\T_{n_1x} (s_1) \M
	\cdots \M
	\T_{n_kx} (s_k) \M
	\init{x} \in A(x)
\end{equation}
This check is akin to performing one local step of simulation, checking that
all possible inputs from neighbors produce an output route
satisfying the interface.
We might \emph{hope} that by performing such a check on \emph{all} components independently,
we could guarantee that all nodes converge to stable states satisfying their interfaces.
If that were the case, then we could verify properties by:

\begin{enumerate}
	\item Checking that all components guarantee their interfaces, under the assumption their neighbors do as well; and
	\item Checking that the interfaces imply the network property of interest (\EG reachability, access control, no transit).
\end{enumerate}

\paragraph{The problem: execution interference}
It turns out this simple and natural verification procedure
is unsound: users can supply interfaces that, when analyzed in isolation, satisfy
equation~\eqref{eq:sv-check} above, but \emph{exclude} stable states computed by simulation.
Hence, the second verification step is pointless:
a destination that appears reachable according to an interface may not be;
conversely, a route that appears blocked may not be.

Let us reconsider the running example, where we
assign \fn{w} an initial route with local preference 100, and assume
the external neighbor $n$ can send us any route ($\TR{}$).
A user could provide the interfaces shown in Figure~\ref{fig:bad-interfaces-full}
to falsely conclude that $e$ will not receive a route from $w$.
\begin{figure}[H]
  \centering
	\begin{tikzpicture}
		\Vertex[x=0,y=-.5,label=$n$,fontsize=\normalsize,size=.6,color=nodeorange]{N}
		\Vertex[x=1.1,y=.6,label=$w$,fontsize=\normalsize,size=.6]{W}
		\Vertex[x=2.3,label=$v$,fontsize=\normalsize,size=.6]{V}
		\Vertex[x=3.7,label=$d$,fontsize=\normalsize,size=.6]{D}
		\Vertex[x=5.3,label=$e$,fontsize=\normalsize,size=.6]{E}
		\Text[x=0,y=.1,fontsize=\small]{$\TR{}$}
		\Text[x=1.1,y=1.2,fontsize=\small]{$s.\lp=100$}
		\Text[x=2.3,y=-.6,fontsize=\small,color=darkred]{$s.\lp = 200 \wedge \neg s.\bgptag$}
		\Text[x=3.7,y=.6,fontsize=\small,color=darkred]{$s.\lp = 200 \wedge \neg s.\bgptag$}
		\Text[x=5.3,y=-.6,fontsize=\small]{$s = \nullm{}$}
		\Edge[Direct,label=filter,fontsize=\footnotesize](N)(V)
		\Edge[Direct,label=tag,fontsize=\footnotesize](W)(V)
		\Edge[Direct,bend=20](V)(D)
		\Edge[Direct,bend=20](D)(V)
		\Edge[Direct,label=allow,fontsize=\footnotesize](D)(E)
	\end{tikzpicture}
  \caption{Running example with bad interfaces.}
  \label{fig:bad-interfaces-full}
  \Description{
    The running example annotated with bad interfaces that pass the strawperson verification procedure,
    but allow us to conclude that node \fn{e} receives no route from \fn{w}.
    The interfaces are shown above each node.
    Node \fn{n} has the interface $\TR{}$, meaning any route;
    node \fn{w} has the interface $s.\lp = 100$, meaning a route with local preference of 100.
    Nodes \fn{v} and \fn{d} have the interfaces $s.\lp = 200 \wedge \neg s.\bgptag$:
    a route with local preference of 200, and which is not tagged internal.
    Finally, node \fn{e} has the interface $s = \nullm{}$, meaning only the null route.
  }
\end{figure}
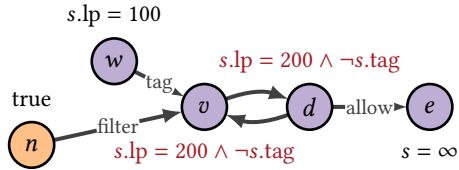
Here, it is easy to check that nodes \fn{n} and \fn{w} satisfy equation~\eqref{eq:sv-check}.
Node \fn{n}'s interface is simply any route.
Node \fn{w}'s route can be any route with a local preference of
100.\footnote{It could be any route ($\TR{}$), as the edge
	$wv$ applies the default preference of 100, but
	for clarity we label the routes at \fn{w} with preference 100.}

The surprise comes at node \fn{v} where its interface \emph{only includes} routes
that satisfy $\neg s.\bgptag$, \IE routes not tagged as internal.
Those routes have $s.\lp = 200$ and may have any path length.
But the route from \fn{w} is tagged \emph{\TR{}} along the
edge $wv$ --- why is such a route erroneously excluded from
\fn{v}'s interface?
We show the component centered at \fn{v} in Figure~\ref{fig:bad-interfaces-v}.
\begin{figure}[H]
	\centering
	\begin{tikzpicture}
		\Vertex[x=0,y=-.5,label=$n$,fontsize=\normalsize,size=.6,color=nodeorange]{N}
		\Vertex[x=1.1,y=.6,label=$w$,fontsize=\normalsize,size=.6]{W}
		\Vertex[x=2.3,label=$v$,fontsize=\normalsize,size=.6]{V}
		\Vertex[x=3.7,label=$d$,fontsize=\normalsize,size=.6]{D}
		\Vertex[x=5.3,label=$e$,fontsize=\normalsize,size=.6,opacity=.2]{E}
		\Text[x=0,y=.1,fontsize=\small]{$\TR{}$}
		\Text[x=1.1,y=1.2,fontsize=\small]{$s.\lp=100$}
		\Text[x=2.3,y=-.6,fontsize=\small,color=darkred]{$s.\lp = 200 \wedge \neg s.\bgptag$}
		\Text[x=3.7,y=.6,fontsize=\small,color=darkred]{$s.\lp = 200 \wedge \neg s.\bgptag$}
		\Text[x=5.3,y=-.6,fontsize=\small,opacity=.2]{$s = \nullm{}$}
		\Edge[Direct,label=filter,fontsize=\footnotesize](N)(V)
		\Edge[Direct,label=tag,fontsize=\footnotesize](W)(V)
		\Edge[Direct,bend=20,opacity=.2](V)(D)
		\Edge[Direct,bend=20](D)(V)
		\Edge[Direct,label=allow,fontsize=\footnotesize,opacity=.2](D)(E)
	\end{tikzpicture}
  \caption{Running example centered on \fn{v}'s component.}
  \label{fig:bad-interfaces-v}
  \Description{
    A close-up on node \fn{v} in the running example annotated with bad interfaces.
    The interfaces are those shown in Figure~\ref{fig:bad-interfaces-full}.
    Node \fn{e} and the edge \fn{vd} are greyed out, focusing attention on \fn{v}
    and the interfaces of its in-neighbors.
  }
\end{figure}
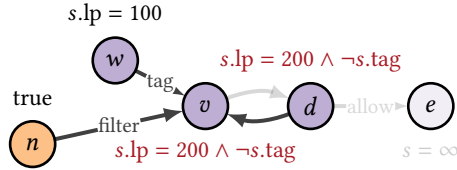
When computing its stable state,
\fn{v} will compare the routes it receives from \fn{w} and \fn{d}:
because all routes from \fn{w} have a local preference set to 100 by $\T_{wv}$,
whereas all routes from \fn{d} have a better local preference of 200,
\fn{v} will always wind up selecting the route from \fn{d} over the route from \fn{w}.

But how then did \fn{d} acquire these preferential routes tagged \FL{}?
Such routes came in turn from \fn{v}'s interface.
Figure~\ref{fig:bad-interfaces-d} shows the component centered at \fn{d}.
\begin{figure}[H]
	\centering
	\begin{tikzpicture}
		\Vertex[x=0,y=-.5,label=$n$,size=.6,color=nodeorange,opacity=.2]{N}
		\Vertex[x=1.1,y=.6,label=$w$,size=.6,opacity=.2]{W}
		\Vertex[x=2.3,label=$v$,size=.6]{V}
		\Vertex[x=3.7,label=$d$,size=.6]{D}
		\Vertex[x=5.3,label=$e$,size=.6,opacity=.2]{E}
		\Text[x=0,y=.1,fontsize=\small,opacity=.2]{$\TR{}$}
		\Text[x=1.1,y=1.2,fontsize=\small,opacity=.2]{$s.\lp=100$}
		\Text[x=2.3,y=-.6,fontsize=\small,color=darkred]{$s.\lp = 200 \wedge \neg s.\bgptag$}
		\Text[x=3.7,y=.6,fontsize=\small,color=darkred]{$s.\lp = 200 \wedge \neg s.\bgptag$}
		\Text[x=5.3,y=-.6,fontsize=\small,opacity=.2]{$s = \nullm{}$}
		\Edge[Direct,label=filter,fontsize=\footnotesize,opacity=.2](N)(V)
		\Edge[Direct,label=tag,fontsize=\footnotesize,opacity=.2](W)(V)
		\Edge[Direct,bend=20](V)(D)
		\Edge[Direct,bend=20,opacity=.2](D)(V)
		\Edge[Direct,label=allow,fontsize=\footnotesize,opacity=.2](D)(E)
	\end{tikzpicture}
  \caption{Running example centered on \fn{d}'s component.}
  \label{fig:bad-interfaces-d}
  \Description{
    A close-up on node \fn{d} in the running example annotated with bad interfaces.
    The interfaces are those shown in Figure~\ref{fig:bad-interfaces-full}.
    Nodes \fn{n}, \fn{w} and \fn{e}, and the edge \fn{dv} are greyed out, focusing attention on \fn{d}
    and the interfaces of its in-neighbor \fn{v}.
  }
\end{figure}
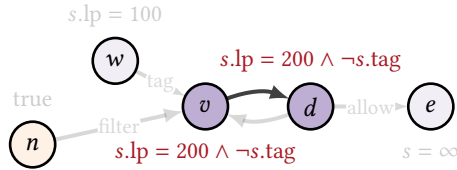
What has happened is that \fn{v} transmits its spurious routes
to \fn{d}, enabling \fn{d} to justify its own spurious routes.
\fn{d} transmits these back again to \fn{v},
where \fn{d}'s routes interact with the legitimate routes from \fn{w}.
Since \fn{w}'s routes have lower local preference, \fn{v} discards them during
computation of stable states.
\new{In a nutshell, our interface proposed routes that do \emph{not} soundly overapproximate the legitimate routes
  from the true simulation, but our verification procedure \emph{accepted} this bad interface
  as it circularly justified itself at \fn{v} and \fn{d}.}
How might we prevent this \emph{execution interference}?

\paragraph{Other approaches}
We can modify this verification procedure to make it sound,
but these solutions will limit the verification procedure's power
or the expressiveness of the properties it can prove.

One approach is to limit every interface to \emph{exactly one route}.
Doing so avoids introducing any imaginary executions in the first place.
Kirigami~\cite{kirigami} takes this approach, but the cost is that a user
analyzing their network must know \emph{exactly} which routes appear at which locations.
Computing routes exactly can be difficult in practice,
and would seem unnecessary if all one cares about is a high-level property such as reachability.
Moreover, it makes the interfaces brittle in the face of change---any change in network configuration likely necessitates a change in interface.
A superior system would allow operators to define \emph{durable} and \emph{abstract} interfaces that imply key properties,
and to check configuration updates against those interfaces.

Another approach is to limit the set of properties that the system can check to only those that say what \emph{does not happen} in the network rather than what does happen.
This is the approach Lightyear~\cite{lightyear} takes.
For instance, Lightyear can check that node \fn{a} will \emph{not} be able to reach node \fn{b}, but
not that \fn{a} and \fn{b} will have connectivity --- a common requirement in networks.

A final approach is to \emph{statically order} the components,
and verify each component according to this ordering, using no information from
the not-yet-verified components.
\new{By ordering \fn{v} before \fn{d}, we would need to satisfy \fn{v}'s invariant
  without routes from \fn{d} (treating \fn{d}'s invariant as \FL{}) ---
  this would fail for the bad invariant $s.\lp = 200 \wedge \neg s.\bgptag$
  using only routes from \fn{n} and \fn{w}.}
In practice, this is still unnecessarily conservative.
The running example is overly simple as it shows routes propagated through a network
in a single direction from left to right.
In realistic networks, multiple destinations may broadcast routes in multiple
directions at once.
In such situations, there
may be no way to order the components, and verification may not be possible.

\subsection{The Solution: A Temporal Model}
\label{sec:the-solution}

Our key insight is to change the model: rather than focus exclusively on the final
stable states of a system, as a Minesweeper-style verifier would,
we ensure that the model preserves the \emph{entirety} of every step-by-step execution.
To make this work, we need to add information to the model: a notion of \emph{logical time}.
By associating every route with the time at which a node computes it, we can
\begin{enumerate*}[label=\emph{(\roman*)}]
	\item ensure that \emph{all} routes at a particular time are properly considered, and their executions extended a time step, and
	\item ensure that we avoid collisions between routes computed at different times.
\end{enumerate*}

To verify such routing systems modularly, we once again must specify interfaces,
but this time the interface for each node will specify the set of routes
that may appear \emph{at any time}.
We write this now as an interface $\Anno{x}{t}$ that takes both a node $x$ \emph{and a time $t$}
and returns an overapproximation of the set of routes that may appear at $x$ \emph{at that time $t$}.
To check the interfaces, we use a verification procedure structured inductively
with respect to time, as follows:
\begin{itemize}
	\item At every node \fn{x}, check $\init{x}$ is included in $\Anno{x}{0}$
	\item Consider each node \fn{x} with neighbors $n_1, \ldots, n_k$.
	      At time $t+1$, check that merging any combination of routes
	      $s_1 \in \Anno{n_1}{t}, \ldots, s_k \in \Anno{n_k}{t}$ from neighbors' interfaces
	      at time $t$ produces a route in $\Anno{x}{t+1}$:
	      \begin{equation}
		      \T_{n_1x} (s_1) \M \cdots \M \T_{n_kx} (s_k) \M \init{x} \in \Anno{x}{t+1}
	      \end{equation}
\end{itemize}
Because this procedure is structured inductively, we can prove,
by induction on time, that all states at all times are included
in their respective interfaces---the procedure is \emph{sound}.

\new{For brevity, we specify our interfaces using \emph{temporal operators}.
  These operators are functions that take a time $t$ as an argument, compare it
  to an explicit time variable $\tau$, and return a predicate.}
We write $\globally(\PP)$ (``globally $\PP$'') when a node's interface includes
the routes that satisfy predicate $\PP$ for all times $t$.
\new{We write $\PP_1 \until{\tau} \tpop_2$ (``$\PP_{1}$ until $\tpop_{2}$'')} when a
node may have routes satisfying $\PP_1$ until time $\tau{-}1$ and
\new{operator $\tpop_2(\tau)$} holds afterwards.
Finally, we write \new{$\finally{\tau}(\tpop)$ (``finally $\tpop$'')}
to mean that eventually at time $\tau$ routes start satisfying \new{$\tpop(\tau)$}.

\paragraph{Verifying correct interfaces}
\new{Figure~\ref{fig:example-weak}} below presents an interface we may verify with this model.
\begin{figure}[H]
	\centering
	\begin{tikzpicture}
		\Vertex[x=0,y=-.5,label=$n$,fontsize=\normalsize,size=.6,color=nodeorange]{N}
		\Vertex[x=1.1,y=.6,label=$w$,fontsize=\normalsize,size=.6]{W}
		\Vertex[x=2.3,label=$v$,fontsize=\normalsize,size=.6]{V}
		\Vertex[x=3.7,label=$d$,fontsize=\normalsize,size=.6]{D}
		\Vertex[x=5.3,label=$e$,fontsize=\normalsize,size=.6]{E}
		\Text[x=0,y=.1,fontsize=\small]{$\globally(\TR)$}
		\Text[x=1.1,y=1.2,fontsize=\small]{$\globally(s.\mathrm{lp} = 100)$}
		\Text[x=2.3,y=-.6,fontsize=\small]{$\globally(s = \nullm{} \vee s.\bgptag)$}
		\Text[x=3.7,y=.6,fontsize=\small]{$\globally(s = \nullm{} \vee s.\bgptag)$}
		\Text[x=5.3,y=-.6,fontsize=\small]{$\globally(s = \nullm{} \vee s.\bgptag)$}
		\Edge[Direct,label=filter,fontsize=\footnotesize](N)(V)
		\Edge[Direct,label=tag,fontsize=\footnotesize](W)(V)
		\Edge[Direct,bend=20](V)(D)
		\Edge[Direct,bend=20](D)(V)
		\Edge[Direct,label=allow,fontsize=\footnotesize](D)(E)
	\end{tikzpicture}
  \caption{Running example with interfaces proving that if \fn{e} has a route, it is tagged.}
  \label{fig:example-weak}
  \Description{
    The running example annotated with temporal interfaces that allow us to conclude
    that if node \fn{e} has a route, then that route is tagged.
    The interfaces are shown above each node.
    Node \fn{n} has the interface $\globally(\TR{})$, meaning any route at all times;
    node \fn{w} has the interface $\globally(s.\lp = 100)$, meaning a route with local preference of 100 at all times.
    Nodes \fn{v}, \fn{d} and \fn{e} have the interfaces $\globally(s = \nullm{} \vee s.\bgptag)$:
    at all times, they have a null route, or a route tagged internal.
  }
\end{figure}
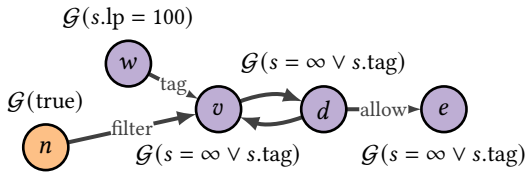
We again assume that \fn{n} sends any route at any time, denoted by the interface: $\globally(\TR)$.
We assume \fn{w} has a route with default local preference: $\globally(s.\lp = 100)$.
The interesting part is at nodes \fn{v} and \fn{d} where the interfaces state
that there is always no route (\EG at time 0), or a tagged route:
$\globally(s = \nullm{} \vee s.\bgptag)$.
We can then prove a weak property about node \fn{e}:
\emph{if it receives a route}, then the route will be tagged internal.
We can prove node \fn{v}'s route satisfies its interface since $\T_{wv}$ tags routes on import from node \fn{w},
routes from \fn{n} are correctly dropped, and routes from \fn{d} must also have a tag per its interface.
In fact, all the nodes satisfy their interface given their neighbors' interfaces.

\paragraph{Proving reachability}
Figure~\ref{fig:example-weak}'s interfaces were too weak to prove that \fn{w} can reach \fn{e}.
The problem is that they reason about \emph{all} times (\IE from time 0 onward),
yet \fn{e} will only \emph{eventually} have a route from \fn{w} at some time in the future.
Consider now the stronger interfaces shown \new{in Figure~\ref{fig:example-reach}}:
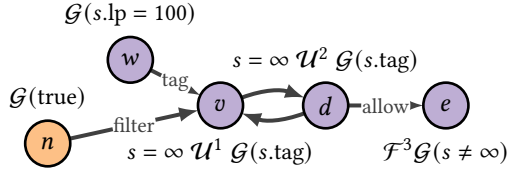
\begin{figure}[H]
	\centering
	\begin{tikzpicture}
		\Vertex[x=0,y=-.5,label=$n$,size=.6,RGB,color={253,192,134}]{N}
		\Vertex[x=1.1,y=.6,label=$w$,size=.6]{W}
		\Vertex[x=2.3,label=$v$,size=.6]{V}
		\Vertex[x=3.7,label=$d$,size=.6]{D}
		\Vertex[x=5.3,label=$e$,size=.6]{E}
		\Text[x=0,y=.1,fontsize=\small]{$ \globally(\TR)$}
		\Text[x=1.1,y=1.2,fontsize=\small]{$ \globally(s.\lp = 100)$}
		\Text[x=2.3,y=-.6,fontsize=\small]{$s = \nullm \until{1}\globally(s.\bgptag)$}
		\Text[x=3.7,y=.6,fontsize=\small]{$s = \nullm \until{2}\globally(s.\bgptag)$}
		\Text[x=5.3,y=-.6,fontsize=\small]{$\finally{3}\globally(s \neq \nullm{})$}
		\Edge[Direct,label=filter,fontsize=\footnotesize](N)(V)
		\Edge[Direct,label=tag,fontsize=\footnotesize](W)(V)
		\Edge[Direct,bend=20](V)(D)
		\Edge[Direct,bend=20](D)(V)
		\Edge[Direct,label=allow,fontsize=\footnotesize](D)(E)
	\end{tikzpicture}
  \caption{Running example with interfaces proving \fn{e} can reach \fn{w}.}
  \label{fig:example-reach}
  \Description{
    The running example annotated with temporal interfaces that allow us to conclude
    that node \fn{e} can reach \fn{w}.
    The interfaces are shown above each node.
    Node \fn{n} has the interface $\globally(\TR{})$, meaning any route at all times;
    node \fn{w} has the interface $\globally(s.\lp = 100)$, meaning a route with local preference of 100 at all times.
    Node \fn{v} has the interface $s = \nullm \until{1} \globally(s.\bgptag)$, meaning it has a null route
    until time step 1, and then from time step 1 onward the route is tagged.
    Node \fn{d} has the interface $s = \nullm \until{2} \globally(s.\bgptag)$: its route is null
    until time step 2, and then the route is tagged thereafter.
    Node \fn{e} has the interface $\finally{3}\globally(s \neq \nullm{})$, meaning from time step 3 onward
    its route is not null.
  }
\end{figure}

As before, we allow \fn{n} and \fn{w} to send any route.
However, now nodes \fn{v} and \fn{d} declare that they will not have a route \emph{until}
a specified (logical) time, at which point they receive a tagged route.
We give precise witness times for \fn{v} and \fn{d}'s interfaces, as otherwise
\fn{v} could give \fn{d} a non-null route (or vice-versa) that would violate the interface
before its witness time.
\fn{e}'s interface simply requires that \fn{e} receives some
route at the witness time (allowing arbitrary routes before the witness time).
These interfaces are sufficient to prove that \fn{e} will eventually
receive a route to \fn{w}, since \fn{d} will eventually have a route tagged as internal,
and hence \fn{e} will allow it.

\paragraph{Debugging erroneous interfaces}
Let us revisit the example where a user gave unsound interfaces
using spurious routes with local preference 200.
Figure~\ref{fig:example-bad-timed} presents the equivalent temporal interfaces.
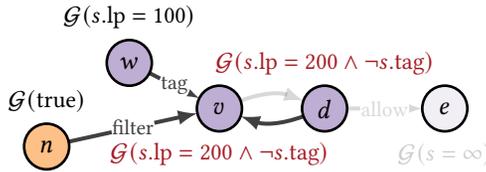
\begin{figure}[H]
	\centering
	\begin{tikzpicture}
		\Vertex[x=0,y=-.5,label=$n$,size=.6,RGB,color={253,192,134}]{N}
		\Vertex[x=1.1,y=.6,label=$w$,size=.6]{W}
		\Vertex[x=2.3,label=$v$,size=.6]{V}
		\Vertex[x=3.7,label=$d$,size=.6]{D}
		\Vertex[x=5.3,label=$e$,size=.6,opacity=.2]{E}
		\Text[x=0,y=.1,fontsize=\small]{$\globally(\TR{})$}
		\Text[x=1.1,y=1.2,fontsize=\small]{$\globally(s.\lp=100)$}
		\Text[x=2.3,y=-.6,fontsize=\small,color=darkred]{$\globally(s.\lp = 200 \wedge \neg s.\bgptag)$}
		\Text[x=3.7,y=.6,fontsize=\small,color=darkred]{$\globally(s.\lp = 200 \wedge \neg s.\bgptag)$}
		\Text[x=5.3,y=-.6,fontsize=\small,opacity=.2]{$\globally(s = \nullm{})$}
		\Edge[Direct,label=filter,fontsize=\footnotesize](N)(V)
		\Edge[Direct,label=tag,fontsize=\footnotesize](W)(V)
		\Edge[Direct,bend=20,opacity=.2](V)(D)
		\Edge[Direct,bend=20](D)(V)
		\Edge[Direct,label=allow,fontsize=\footnotesize,opacity=.2](D)(E)
	\end{tikzpicture}
  \caption{Running example with bad temporal interfaces.}
  \label{fig:example-bad-timed}
  \Description{
    The running example annotated with bad temporal interfaces lifted from our earlier
    execution interference example in Figure~\ref{fig:bad-interfaces-full}.
    Each interface is now wrapped in a $\globally$ temporal operator.
  }
\end{figure}

Unlike before, the verification procedure detects an error: the interfaces at nodes
\fn{v} and \fn{d} do not include the initial route $\nullm{}$ at time 0.
As a result, the user will receive a counterexample for time $t = 0$ when verifying \fn{v} or \fn{d}.
Suppose our imaginative user tries to circumvent this issue by also including the initial route in the interfaces
for \fn{v} and \fn{d} with the interface:
\begin{equation*}
	\globally\big( (s.\lp = 200 \wedge \neg s.\bgptag) \vee (s = \nullm{}) \big)
\end{equation*}
However, doing so merely pushes the problem ``one step forward in time''---there is no way to circumvent our temporal analysis.
If \fn{d}'s route may be $\nullm{}$, \fn{v}'s interface must also consider what routes it selects when that is the case,
including tagged routes such as $\Rec{100}{1}{\TR}$.
The user might receive a counterexample at time $t = 1$ where \fn{v}'s route is the following:
\begin{equation*}
	\T_{wv} (\Rec{100}{0}{\FL}) \M \T_{nv}(\nullm{}) \M \T_{dv}(\nullm{}) = \Rec{100}{1}{\TR}
\end{equation*}
where $\Anno{v}{1}$ does not contain the result $\Rec{100}{1}{\TR}$.
This counterexample reveals the fact that there is an error
in either the specification (as in this case) or the configuration
(\EG if a buggy configuration tagged routes from \fn{w} \FL rather than \TR as expected).

\paragraph{Properties and ghost state}
\new{Although our modular properties reference node-local routes,
  we can verify many end-to-end control plane properties using \emph{ghost state}.}
Users may model routes with additional ``ghost'' fields (\CF ghost fields in Dafny~\cite{dafny})
that play no role in a protocol's routing behavior, yet may capture end-to-end properties.
For instance, suppose we added a boolean ``ghost'' field ``fromw''
to indicate if a route originated from node \fn{w} \new{(see Figure~\ref{fig:example-ghost})}.
We assume this field is initially true at \fn{w}, false at all other nodes,
and that transfer functions preserve its value.
With this addition, we can now check that \fn{e} receives a route from \fn{w} and no other node.
\begin{figure}[H]
  \centering
	\begin{tikzpicture}
		\Vertex[x=0,y=-.5,label=$n$,size=.6,color=nodeorange]{N}
		\Vertex[x=1.1,y=.6,label=$w$,size=.6]{W}
		\Vertex[x=2.3,label=$v$,size=.6]{V}
		\Vertex[x=3.7,label=$d$,size=.6]{D}
		\Vertex[x=5.3,label=$e$,size=.6]{E}
		\Text[x=0,y=.1,fontsize=\small]{$ \globally(\neg s.\mathrm{fromw})$}
		\Text[x=1.1,y=1.2,fontsize=\small]{$ \globally(s.\lp = 100 \wedge s.\mathrm{fromw})$}
		\Text[x=2.3,y=-.6,fontsize=\small]{$s = \nullm \until{1}\globally(s.\bgptag \wedge s.\mathrm{fromw})$}
		\Text[x=3.8,y=.6,fontsize=\small]{$s = \nullm \until{2}\globally(s.\bgptag \wedge s.\mathrm{fromw})$}
		\Text[x=5.45,y=-.6,fontsize=\small]{$\finally{3}\globally(s.\mathrm{fromw})$}
		\Edge[Direct,label=filter,fontsize=\footnotesize](N)(V)
		\Edge[Direct,label=tag,fontsize=\footnotesize](W)(V)
		\Edge[Direct,bend=20](V)(D)
		\Edge[Direct,bend=20](D)(V)
		\Edge[Direct,label=allow,fontsize=\footnotesize](D)(E)
	\end{tikzpicture}
  \caption{Running example augmented with a ``fromw'' ghost variable.}
  \label{fig:example-ghost}
  \Description{
    The running example with temporal interfaces allowing us to prove that
    \fn{e} has a route to \fn{w} (as in Figure~\ref{fig:example-reach}),
    but augmented with a ``fromw'' ghost variable.
    The interfaces are shown above each node.
    Node \fn{n} now has the interface $\globally(\neg s.\mathrm{fromw})$, meaning any route \emph{not from \fn{w}} at all times;
    node \fn{w} has the interface $\globally(s.\lp = 100 \wedge s.\mathrm{fromw})$,
    meaning a route with local preference of 100 and from \fn{w} at all times.
    Node \fn{v} has the interface $s = \nullm \until{1} \globally(s.\bgptag \wedge s.\mathrm{fromw})$,
    meaning it has a null route until time step 1, and then from time step 1 onward the route is tagged
    and from \fn{w}.
    Node \fn{d} has the interface $s = \nullm \until{2} \globally(s.\bgptag \wedge s.\mathrm{fromw})$:
    its route is null until time step 2, and then the route is tagged and from \fn{w} thereafter.
    Node \fn{e} has the interface $\finally{3}\globally(s.\mathrm{fromw})$, meaning from time step 3 onward
    its route is from \fn{w}.
  }
\end{figure}
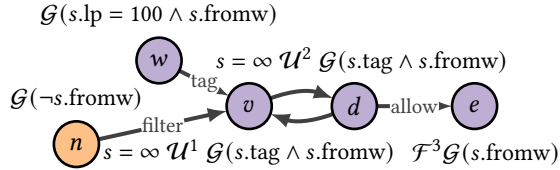

Ghost state allows us to specify and check many network properties;
Table~\ref{fig:example-properties} presents a variety of other possibilities.
\new{For example, to check for routing loops,
  we can use a multi-set of visited nodes (up to $\lvert V \rvert$) and
  mark if a node is ever visited more than once.}
That said, while ghost state is general and flexible, it can only capture information
about the history of a route at a \emph{single} node and is thus not a panacea.
For instance, properties involving the routes at more than one node, such as a formulation of local equivalence~\cite{minesweeper},
where $\st(u)(t) = \st(v)(t)$ for some arbitrary $u,v$ and $t$, is inexpressible using our verifier.
\new{We focus on properties of the network's \emph{control plane}, without considering the data plane forwarding behavior.
  This rules out data plane properties such as load balancing and multipath consistency.}

\begin{table}
	\centering
	\caption{Ghost state for selected example properties.}\label{fig:example-properties}
	\begin{tabular}{ ll }
		\toprule
		\textbf{Property}                       & $\textbf{Added ghost state}$  \\
    \midrule
		reachability to $d$~\cite{batfish}      & \textit{1 bit to mark routes from }$d$                               \\
		isolation~\cite{minesweeper}            & \textit{1 bit per isolation domain}                                  \\
		ordered waypoint~\cite{netplumber}.     & $log_{2}(k)$ \textit{bits for }$k$ \textit{waypoints} \\
		unordered waypoint~\cite{minesweeper}   & $k$ \textit{bits for }$k$ \textit{waypoints}                         \\
    \new{routing loops~\cite{minesweeper}} & \new{\textit{up to }$\lvert V \rvert$ \textit{bits to track visited nodes}} \\
		no-transit~\cite{propane}               & \textit{mark with }$\{\mathrm{peer}, \mathrm{prov}, \mathrm{cust}\}$ \\
		fault tolerance~\cite{minesweeper}      & \textit{up to }$\lvert E \rvert$ \textit{bits to track failed edges} \\
		bounded path length~\cite{nod}          & \textit{integer length field}                                        \\
		\bottomrule
	\end{tabular}
\end{table}

\section{Formal Model with Temporal Invariants}
\label{sec:theory}

Figure~\ref{fig:notation-1} presents the key
definitions and notation needed to formalize our verification procedure.
The notation follows from the previous section,
\EG a network instance $N=(G,S,\init{},\TT,\M)$
contains the key components introduced earlier.
To refer to the route computed by a network simulator at node $v$ at time $t$, we use the
notation $\st(v)(t)$ (defined as before---see Figure~\ref{fig:notation-1}).

\begin{figure}[t]
	\centering
	\begin{minipage}[t]{\linewidth}
		{\textbf{Network instances}\quad\fbox{$N = (G,S,\init{},\TT,\M)$}}
		\[
			\begin{array}{lr}
				G = (V,E)                                          & \text{network topology}              \\
				V                                                  & \text{topology nodes}                \\
				E \subseteq V \times V                             & \text{topology edges}                \\
				S                                                  & \text{set of network routes}         \\
				s \mspace{6mu} \in S                               & \text{a route}                       \\
				\init{} \mspace{6mu} : V \rightarrow S             & \text{node initialization function} \\
				\init{v} \in S                                     & \text{initial route at node }v       \\
				\TT \mspace{.5mu}: E \rightarrow (S \rightarrow S) & \text{edge transfer functions}       \\
				\T_{e} : S \rightarrow S                           & \text{transfer function for edge }e  \\
				\M \mspace{1.9mu} : S \times S \rightarrow S       & \text{merge function}
			\end{array}
		\]
		{\textbf{Network semantics}\quad\fbox{$\st : V \rightarrow (\nat \rightarrow S)$}}
		\begin{align*}
			\st(v)(t) & \in S                               & \text{state at node $v$ at time $t$} \\
			\preds(v) & = \{ u ~\vert~ u \in V, uv \in E \} & \text{in-neighbors of $v$}
		\end{align*}
		\begin{align}
			\st(v)(0)   & = \init{v}       & \label{eq:init-st} \\
			\st(v)(t+1) & = \MF{v}{\st}{t} & \label{eq:next-st}
		\end{align}
	  \end{minipage}
      \Description{A series of equations recapitulating the network model.
      Network instances are presented, along with the network simulation semantics.}
	\caption{Summary of our formal routing model and notation.}
	\label{fig:notation-1}
\end{figure}
\begin{figure}
	  \centering
	\begin{minipage}[t]{\linewidth}
		{\textbf{Interfaces, Properties and Metavariables}}
		\[
			\begin{array}{@{\qquad\ \ }l@{}c@{}l@{\qquad\quad}r}
				A          & \ {} : {} \ {} & V \rightarrow (\nat \rightarrow 2^{S}) & \textrm{node interfaces/invariants} \\
				P          & \ : \          & V \rightarrow (\nat \rightarrow 2^{S}) & \textrm{node properties}          \\
				\predicate & \ : \          & 2^S                                  & \textrm{sets of states}             \\
                \new{\tau} &\ : \           & \nat          & \new{\textrm{witness times}}
			\end{array}
		\]
		{\textbf{Temporal operators} \quad\fbox{$\tpop: \nat \rightarrow 2^{S}$}}
		\[
			\begin{array}{lclr}
				\globally(\predicate)                       & = & \absl{t} \predicate       & \text{globally} \\
				\predicate \until{\tau} \tpop        & = & \absl{t} \mathit{if}\ t < \tau\ \mathit{then} \ \predicate\  \mathit{else}\ \tpop(t) & \text{until}                 \\
				\finally{\tau}(\tpop)                   & = & S \until{\tau} \tpop                                         & \text{finally}               \\
				\tpop_1 \sqcap \tpop_2                             & = & \absl{t} \tpop_1(t) \cap \tpop_2(t)                                                & \text{intersection (lifted)} \\
				\tpop_1 \sqcup \tpop_2                             & = & \absl{t} \tpop_1(t) \cup \tpop_2(t)                                                & \text{union (lifted)}        \\
				\predneg \tpop                                 & = & \absl{t} S \setminus \tpop(t)                                                  & \text{negation (lifted)}
			\end{array}
		\]
		{\textbf{Verification Conditions}}

		{\textit{Initial condition \new{for node $v$}}:}
		\begin{equation}
			\label{eq:init-vc} \init{v} \in \Anno{v}{0}
		\end{equation}

		{\textit{Inductive condition \new{for node $v$ with in-neighbors $u_{1}, u_{2}, \ldots, u_{n}$}}:}
		\begin{equation}
			\label{eq:ind-vc}
			\begin{split}
				&\forall t \in \nat, 
				\forall s_{1} \in \Anno{u_{1}}{t},
				\forall s_{2} \in \Anno{u_{2}}{t},
				\ldots,
				\forall s_{n} \in \Anno{u_{n}}{t}, \\
				& \left(\init{v} \M \MM_{i\in\{1,..,n\}} \T_{u_{i}v}(s_{i}) \right) \in \Anno{v}{t+1}
			\end{split}
		\end{equation}

		{\textit{Safety condition \new{for node $v$}}:}
		\begin{equation}
			\label{eq:safety-vc}
			\forall t \in \nat, \Anno{v}{t} \subseteq \PPr{v}{t}
		\end{equation}
	\end{minipage}
	\caption{Summary of our interfaces and properties, temporal operators and verification conditions.}
	\label{fig:notation-2}
  \Description{
    A series of formulas and notation summarizing three aspects of the formal model.
    At the top, the notation for interfaces, properties and their metavariables is shown.
    We denote interfaces with $\An$, which is a function from nodes $V$ to a function
    from times $\nat$ to sets of states $2^{S}$.
    We denote properties with $\Pr$ and have the same type signature as interfaces.
    We use $\predicate$ to denote sets of states.
    The letter $\tau$ represents a natural number witness time.
    In the middle of the figure are the temporal operators, which are denoted by $\tpop$ and are
    functions from times $\nat$ to sets of states $2^{S}$.
    The globally operator is $\globally(\predicate)$, which is equivalent to a function
    taking a time $t$ as an argument and returning the predicate $\predicate$ (ignoring the argument).
    The until operator is $\predicate \until{\tau} \tpop$, which is equivalent to a function
    taking a time $t$ as an argument and return the predicate $\predicate$ if $t$ is less than $\tau$, and
    otherwise returning the result of calling the temporal operator $\tpop$ with the argument $t$.
    The finally operator is $\finally{\tau}(\tpop)$. This is equivalent to $S \until{\tau} \tpop$,
    meaning a use of the until operator where the predicate is simply ``true'' (the set of all routes $S$).
    Three other operators are given, which lift set operations to temporal operators.
    These are intersection, union and negation.
    All of these take temporal operators as arguments,
    and are equivalent to a function taking a time $t$ and applying the corresponding set operation
    to the results of the temporal operators (when they are called on $t$):
    for instance, intersection $\tpop_{1} \sqcap \tpop_{2}$ of temporal operators $\tpop_{1}$ and $\tpop_{2}$
    is a function taking a time $t$ that returns the set intersection of $\tpop_{1}(t)$ and $\tpop_{2}(t)$.
    The final third of the figure is the verification conditions.
    First is the initial condition for a node $v$, which states that the initial route $\init{v}$ at $v$ must
    be in the interface $\Anno{v}{0}$ at node $v$ at time 0.
    Next is the inductive condition for a node $v$ with in-neighbors $u_{1},u_{2},\ldots,u_{n}$.
    This states that for all times $t$, for all states $s_{1}$ of $u_{1}$'s interface $\Anno{u_{1}}{t}$ at time $t$,
    states $s_{2}$ of $u_{2}$'s interface $\Anno{u_{2}}{t}$ at time $t$, and so on up to $u_{n}$,
    we have a route contained in $v$'s interface $\Anno{v}{t+1}$ at time $t+1$ which
    is the merge of the initial route $\init{v}$ with every transferred route $\T_{u_{i}v}(s_{i})$ for all $u_{i}$
    of the in-neighbors.
    Lastly, the safety condition for a node $v$ states that for all times $t$, the set of states in $v$'s interface
    $\Anno{v}{t}$ at time $t$ is a subset of the set of states in a property $\PPr{v}{t}$.
  }
\end{figure}

Figure~\ref{fig:notation-2} presents our interfaces and language of temporal operators.
As before, we use $\An$ to denote network interfaces.
\new{
  The temporal operators $Q$ are syntactic sugar for \emph{functions}
  from a time $t$ to a predicate.
  $\predicate \until{\tau} Q$ and $\finally{\tau}(Q)$
  take a concrete natural number $\tau$ as a \emph{witness time}
  parameter to compare against the argument time $t$.
  These witness times are \emph{absolute} times,
  specifying the time steps since the initial time 0.
  We can nest $\until{\tau}$ and $\finally{\tau}$ operators
  to represent arbitrarily-many \emph{intervals} of time,
  \EG $\finally{2} (\predicate_{1} \until{4} \globally(\predicate_{2}))$
  is a function that returns $S$ (\TR{}) given input time 0 or 1;
  $\predicate_{1}$ given time 2 or 3;
  and $\predicate_{2}$ given a time of 4 or more.
  We also lift set union, intersection and negation to succinctly
  combine temporal operators.
  In our evaluation (\S\ref{sec:evaluation}),
  we found this (intentionally small) language sufficient to
  express a wide variety of reachability and security properties.}

A valid interface
is an \emph{inductive invariant}~\cite{giannakopoulou2018compositional}.
Such interfaces satisfy the \emph{initial and inductive conditions}
specified in Figure~\ref{fig:notation-2}.
Valid interfaces may be used to prove node properties, as specified
by the \emph{safety condition} in Figure~\ref{fig:notation-2}.
\new{As the network has a finite number of nodes, we can enumerate them
to check these three conditions on every node in the network.}

The most important property of our system is \emph{soundness}: the simulation states
are included in any interface $\An$ that satisfies the initial and inductive
conditions.

\begin{theorem}[Soundness]\label{thm:sound}
	Let $\An$ satisfy the initial and inductive conditions \new{for all nodes}.
	Then $\An$ always includes the simulation state $\st$, meaning
  $\forall v \in V, \forall t \in \nat, \st(v)(t) \in \Anno{v}{t}$.
\end{theorem}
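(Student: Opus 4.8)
The plan is to prove Theorem~\ref{thm:sound} by induction on the time $t$, mirroring exactly the inductive structure of the verification conditions.

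\medskip
\noindent\textbf{Base case.} For $t = 0$ we must show $\st(v)(0) \in \Anno{v}{0}$ for every node $v$. By equation~\eqref{eq:init-st}, $\st(v)(0) = \init{v}$, and the initial condition~\eqref{eq:init-vc}, which $\An$ satisfies at every node by hypothesis, gives exactly $\init{v} \in \Anno{v}{0}$. So the base case is immediate.

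\medskip
\noindent\textbf{Inductive step.} Assume the claim holds at time $t$, i.e.\ $\st(u)(t) \in \Anno{u}{t}$ for every node $u$; we show it at time $t+1$ for an arbitrary node $v$. Let $u_1, \ldots, u_n$ be the in-neighbors $\preds(v)$ of $v$. By the induction hypothesis, $\st(u_i)(t) \in \Anno{u_i}{t}$ for each $i$. Now instantiate the universally quantified inductive condition~\eqref{eq:ind-vc} for node $v$ at this particular time $t$, choosing the witnesses $s_i := \st(u_i)(t) \in \Anno{u_i}{t}$. This yields
\[
  \left( \init{v} \M \MM_{i \in \{1,\ldots,n\}} \T_{u_i v}(\st(u_i)(t)) \right) \in \Anno{v}{t+1}.
\]
It remains only to recognize that the left-hand side is precisely $\st(v)(t+1)$. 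By equation~\eqref{eq:next-st}, $\st(v)(t+1) = \MF{v}{\st}{t}$, and the definition of this merge-over-predecessors operator (introduced with the network semantics in Figure~\ref{fig:notation-1}) unfolds to exactly $\init{v} \M \MM_{i} \T_{u_i v}(\st(u_i)(t))$. Commutativity and associativity of $\M$ let us reconcile any difference in the order in which the predecessor contributions and $\init{v}$ are combined. Hence $\st(v)(t+1) \in \Anno{v}{t+1}$, completing the induction.

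\medskip
\noindent\textbf{Anticipated obstacle.} There is essentially no deep obstacle here; the result is designed to fall out of the inductive shape of the verification conditions. The only points requiring care are bookkeeping: (i) making sure the $\MF{v}{\st}{t}$ notation is unfolded to the same syntactic form as appears in~\eqref{eq:ind-vc}, using associativity/commutativity of $\M$ to handle the placement of $\init{v}$ and the ordering of neighbors; and (ii) handling the quantifier instantiation cleanly when $v$ has no in-neighbors ($n = 0$), where the merge over the empty index set should degenerate to $\init{v}$ and the inductive condition still applies. I would state the induction on $t$ with the hypothesis quantified over \emph{all} $v$ simultaneously, since the step for $v$ at time $t+1$ consumes the hypothesis at several nodes $u_i$ at time $t$.
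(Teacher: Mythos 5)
Your proof is correct and follows essentially the same route as the paper: induction on $t$ with the hypothesis quantified over all nodes, using the initial condition for the base case and instantiating the inductive condition at $s_i := \st(u_i)(t)$ for the step. The only difference is that you spell out the bookkeeping (associativity/commutativity of $\M$ and the $n=0$ degenerate case) that the paper leaves implicit.
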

\begin{proof}
	By induction on $t$.
  \iftoggle{appendix}{%
      See~\ref{thm:sound-apx}.
    }{%
      See~\cite{extended} for the full proof.
    }
\end{proof}

Since initial and inductive conditions suffice to prove that simulation states
are included within interfaces, it is safe in turn to use interfaces to
check node properties.

\begin{corollary}[Safety]\label{thm:sound-cor}
	Let $\An$ satisfy the initial and inductive conditions \new{for all nodes}.
  Let $\PP$ satisfy the safety condition with respect to $\An$ \new{for all nodes}.
	Then $\forall v \in V, \forall t \in \nat, \st(v)(t) \in \PPr{v}{t}$.
\end{corollary}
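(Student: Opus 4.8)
The plan is to derive this directly from Theorem~\ref{thm:sound}, using the safety condition only as a final containment step; there is essentially no new induction to perform. First I would invoke Theorem~\ref{thm:sound}: since $\An$ satisfies the initial and inductive conditions for all nodes, we obtain $\forall v \in V, \forall t \in \nat,\ \st(v)(t) \in \Anno{v}{t}$. This is the substantive ingredient, and it is already established.

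Next I would fix an arbitrary node $v \in V$ and an arbitrary time $t \in \nat$, and apply the safety condition for $v$, which gives $\Anno{v}{t} \subseteq \PPr{v}{t}$. Chaining the two facts yields $\st(v)(t) \in \Anno{v}{t} \subseteq \PPr{v}{t}$, hence $\st(v)(t) \in \PPr{v}{t}$. Since $v$ and $t$ were arbitrary, this proves $\forall v \in V, \forall t \in \nat,\ \st(v)(t) \in \PPr{v}{t}$, as required.

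I do not anticipate a real obstacle here: the corollary is a one-line consequence of soundness composed with the per-node set inclusion $\Anno{v}{t} \subseteq \PPr{v}{t}$. The only point worth stating carefully is that both hypotheses are assumed \emph{for all nodes}, so the quantifier over $v$ in the conclusion is covered uniformly; there is no need to order or stratify the nodes, precisely because soundness already did the temporal induction once and for all.
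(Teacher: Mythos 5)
Your proof is correct and matches the paper's own argument essentially verbatim: invoke Theorem~\ref{thm:sound} to get $\st(v)(t) \in \Anno{v}{t}$, then chain with the safety condition $\Anno{v}{t} \subseteq \PPr{v}{t}$ to conclude.
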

\begin{proof}
	From definitions.
  \iftoggle{appendix}{%
      See~\ref{thm:sound-cor-apx}.
    }{%
      See~\cite{extended}.
    }
\end{proof}

\new{A \emph{closed network} is a network instance where all initial routes are fixed routes,
such that all states of the network $\st$ are captured by a concrete simulation.}
Our verification procedure is \emph{complete} \new{with respect to a closed network}:
for any \new{closed network},
there exists an interface that characterizes its simulation states
exactly.  One of the consequences of completeness is
that our modular verification procedure is powerful enough to prove any
property that we could prove \new{via a concrete simulation}.%
\footnote{\new{In open networks, an external node may take an arbitrary route at
any point in time: this case is not captured by $\st$.}}

\begin{theorem}[Closed Simulation Completeness]\label{thm:complete}
	Let $\st$ be the state of the \new{closed} network.
	Then for all $v \in V$ and $t \in \nat$, $\Anno{v}{t} = \{\st(v)(t)\}$
	satisfies the initial and inductive conditions \new{for all nodes}.
\end{theorem}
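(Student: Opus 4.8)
The plan is to take the candidate interface $\An$ with $\Anno{v}{t} = \{\st(v)(t)\}$ and simply check that it meets the two requirements from Figure~\ref{fig:notation-2}, namely the initial condition~\eqref{eq:init-vc} and the inductive condition~\eqref{eq:ind-vc}, at an arbitrary node $v$. Unlike the soundness argument, no induction on $t$ is needed here: the interface is read off directly from the (well-defined, since the network is closed) simulation trace, so each verification condition reduces to an instance of the defining equations of the network semantics.

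First I would dispatch the initial condition. By~\eqref{eq:init-st}, $\st(v)(0) = \init{v}$, so $\Anno{v}{0} = \{\st(v)(0)\} = \{\init{v}\}$, and hence $\init{v} \in \Anno{v}{0}$, which is exactly~\eqref{eq:init-vc}.

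Next, the inductive condition. Fix $v$ with in-neighbors $u_1,\ldots,u_n$ (i.e.\ $\preds(v) = \{u_1,\ldots,u_n\}$) and fix $t \in \nat$. The key observation is that each $\Anno{u_i}{t} = \{\st(u_i)(t)\}$ is a \emph{singleton}, so the universal quantifiers over $s_i \in \Anno{u_i}{t}$ in~\eqref{eq:ind-vc} collapse: the only admissible choice is $s_i = \st(u_i)(t)$ for every $i$. Substituting, the route appearing on the left of~\eqref{eq:ind-vc} is $\init{v} \M \MM_{i\in\{1,\ldots,n\}} \T_{u_i v}(\st(u_i)(t))$, which is precisely $\MF{v}{\st}{t}$, and by~\eqref{eq:next-st} this equals $\st(v)(t+1)$. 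Since $\Anno{v}{t+1} = \{\st(v)(t+1)\}$, this route is indeed a member of $\Anno{v}{t+1}$, establishing~\eqref{eq:ind-vc}. As $v$ and $t$ were arbitrary, both conditions hold for all nodes.

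I do not expect a genuine obstacle here; the statement is essentially an unfolding of definitions. The only point requiring (minor) care is to confirm that the indexed merge $\MM_{i}\T_{u_i v}(\cdot)$ used in the verification condition~\eqref{eq:ind-vc} is semantically the same aggregation over in-neighbors as the one used in the network step function $\MF{v}{\st}{t}$ in~\eqref{eq:next-st}; this is guaranteed because $\M$ is associative and commutative, so the aggregation is independent of how the in-neighbors $\preds(v)$ are enumerated. It is also worth remarking why closedness is needed: it is exactly the hypothesis that makes $\st$ a total, deterministic function of time (all initial routes are fixed), so the singleton interface $\{\st(v)(t)\}$ is well-defined; in an open network the external nodes' arbitrary choices are not captured by $\st$, which is why completeness is only claimed relative to the closed network semantics.
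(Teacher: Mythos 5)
Your proof is correct and takes essentially the same approach as the paper's: both verify the initial condition via $\st(v)(0) = \init{v}$ and the inductive condition by observing that the singleton interfaces collapse the universal quantifiers to $s_i = \st(u_i)(t)$, after which~\eqref{eq:next-st} gives membership directly. Your added remarks on why associativity/commutativity of $\M$ and the closedness hypothesis matter are sound but supplementary to the core argument.
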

\begin{proof}
	By construction of the interface.
  \iftoggle{appendix}{%
      See~\ref{thm:complete-apx}.
    }{%
      See~\cite{extended}.
    }
\end{proof}

\section{SMT Algorithms for Verification}
\label{sec:algorithms}

We can check our initial, inductive and safety verification conditions (VCs) independently
for every node in the network using off-the-shelf SMT solvers.
To check an instance of a VC, the solver will attempt to prove the
condition is valid (\IE true for all choices of $t$ \new{and $s_{1},s_{2}, \ldots, s_{n}$})
\new{by checking whether the \emph{negation} of our original VC is satisfiable.}
If the solver can satisfy the negation, it will provide us with a
\emph{counterexample}---\new{the state(s) of the node(s)} at a particular time such that
the VC does not hold.
Counterexamples to initial or inductive conditions indicate that the interface does not
capture the network's behavior, while a counterexample to the safety condition
indicates that the interface is not strong enough to prove the property.
The latter case may occur because the property is simply not true (indicating a bug),
or because we must strengthen the given interface to prove the property.
\new{If the negation is unsatisfiable, then we know the condition is valid.}

\new{A network instance's routes and behavior determines its encoding in SMT.
  For example, one could encode the route triples in \S\ref{sec:overview} as integer variables
  $\lp$ and $\len$ and a Boolean variable $\bgptag$, and use Presburger arithmetic to encode
  $\TT$ and $\M$ with $+$ and $<$.
  To model networks with external peers (like \fn{n}), multiple routing destinations,
  or other sources of nondeterminism,
  we may use \emph{symbolic variables}.
  For external peers, rather than treating \fn{n} as having a specific initial route such as $\nullm{}$,
  we may use a symbolic variable $s$ for $\init{n}$, and then ask the SMT solver to check the VCs
  \emph{for all} choices of $s$.
  For multiple destinations, we can use a symbolic node variable to choose from a set of destination nodes
  when checking that any node in that set is reachable.
  We also can \emph{assume} arbitrary preconditions for symbolic variables,
  \EG enforcing that $s$ is not tagged with $\neg s.\bgptag$.
  These assumptions are not checked when we encode them to SMT.}

\algrenewcommand\algorithmicfunction{\textbf{proc}}
\paragraph{The modular checking procedure}
We present our modular checking procedure in Algorithm~\ref{alg:check}.
The \textsc{CheckMod} procedure iterates over each node of the network
and encodes the underlying formula (for the current node)
of our three verification conditions
by calling \textsc{EncodeInitCond}, \textsc{EncodeIndCond} and \textsc{EncodeSafeCond}
as defined in \eqref{eq:init-vc}, \eqref{eq:ind-vc} and \eqref{eq:safety-vc}, respectively.%
\footnote{\new{For simplicity,
    our encoding uses predicates $V \rightarrow (\nat \rightarrow (S \rightarrow \bool))$
    to represent $\An$ and $\PP$---other than this change, the formulae are identical to the given VCs.}}
\new{Note that we encode $t$ as an explicit symbolic variable:
  our temporal operators expand to a case analysis over this variable
  to determine what predicate holds on the particular node's route.}
We then
ask the solver if every encoded formula is valid using \valid{}.
If \valid{} returns \FL{} for any check,
we ask for the relevant counterexample using \ctex{},
which returns a variable assignment that violates the formula.
Otherwise, we report success ($\An$ and $\PP$ hold).

\begin{algorithm}[t]
	\caption{\label{alg:check} The modular checking algorithm.}
	\begin{algorithmic}
		\Function{CheckMod}{network $(G,S,\init{},\TT,\M)$, interface $\An$, property $\PP$}\label{alg:check-fn}
		\ForAll{$v \in V$} \textbf{in parallel}
		\State $\formula_{1} \gets \Call{EncodeInitCond}{v, S, \init{}, \An}$
		\Comment{\eqref{eq:init-vc}}
		\State $\formula_{2} \gets \Call{EncodeIndCond}{v, \preds(v), S, \init{}, \TT, \M, \An}$
		\Comment{\eqref{eq:ind-vc}}
		\State $\formula_{3} \gets \Call{EncodeSafeCond}{v, S, \An, \PP}$
		\Comment{\eqref{eq:safety-vc}}
    \For{$i \gets 1, 2, 3$}
      \If{$\neg \Call{\valid{}}{\formula_{i}}$} \Return \Call{\ctex{}}{$\formula_{i}$}
        \EndIf
      \EndFor
		\EndFor
		\State \Return \textsc{Success}
		\EndFunction
	\end{algorithmic}
\end{algorithm}

Encoding the initial and inductive conditions is roughly proportional
in size to the complexity of the policy at the given node,
which in turn is related to the in-degree of the node---denser networks that
include nodes with higher in-degree are more expensive to check.
Encoding the safety condition is proportional to the size of the
formulae describing the interface and property (generally tiny).
In addition to reducing the size of each SMT formula, the
factoring of the problem into independent conditions makes it possible
to check conditions on nodes \emph{simultaneously in parallel}.
We will discuss the performance implications of our procedure further in \S\ref{sec:evaluation}.

\paragraph{Handling counterexamples}
\new{When \ctex{} returns a counterexample for a modular check,
  it specifies
  \begin{enumerate*}[label=(\alph*)]
    \item routes of the node(s);
    \item a concrete time; and
    \item concrete values of any symbolic variables in the formula.
  \end{enumerate*}
These counterexamples can guide us in strengthening the invariants or pinpointing bugs.
This is akin to other invariant-checking tools like Dafny~\cite{dafny} where users must
manually refine their invariants.
Anecdotally, our practical approach to designing the interfaces in \S\ref{sec:evaluation}
was to start by choosing $\Anno{v}{t} = \PPr{v}{t}$ (trivially satisfying our safety condition).
A counterexample would then identify a time instance where $v$'s invariant
violated its initial or inductive condition.
To resolve this violation, we often had to add a $\globally(\predicate)$ invariant to
capture additional behavior (\EG no node receives a better route than the legitimate route).
This ``rule-of-thumb'' suggests that counterexample-guided techniques~\cite{ClarkeCegar00}
may be capable of \emph{inferring invariants}. We leave this as future work.}

\paragraph{Incorporating delay}
\new{$\st$ and $\An$ define a synchronous network semantics.
  As shown in prior work on routing algebras~\cite{metarouting, daggitt2018asynchronous},
  when $\TT$ and $\M$ are \emph{strictly monotonic} --- informally, $\M$ prefers a route $r$
  over any transferred route $\T_{e}(r)$ --- there will only be one converged
  state of the network, which our synchronous model will certainly capture.%
  \footnote{As common protocols (\EG OSPF) rely on shortest-paths algorithms
    with strictly monotonic $\TT$ and $\M$, prior work has sometimes assumed the network converges to a unique solution~\cite{arc, fastplane}.}
  In other cases, the synchronous model captures a possible execution of the network.}

\new{
  We may extend our model to consider routes up to a bounded number of units of delay.
  To account for one unit of delay, we can extend our inductive condition to check
  \emph{all} routes sent in the last \emph{two} time steps $t$ and $t+1$ satisfy the invariant
  at time $t+2$, becoming (changes in boxes):
  \begin{align*}
    \forall t \in \nat,\
	&\forall s_{1} \in \boxed{\Anno{u_{1}}{t} \cup \Anno{u_{1}}{t + 1}},
	\ldots,
	\forall s_{n} \in \boxed{\Anno{u_{n}}{t} \cup \Anno{u_{n}}{t + 1}}, \\
    &\left(\init{v} \M \MM_{i\in\{1,..,n\}} \T_{u_{i}v}(s_{i}) \right) \in \boxed{\Anno{v}{t+2}}
  \end{align*}
  We may extend the condition further to consider more units of delay.
  Doing so may also increase the complexity of our invariants.}

\section{Implementation}
\label{sec:implementation}

We implemented \sysname{}'s modular verification procedure
as a library written in C\#.
The library allows users to construct models of networks
and then modularly verify them.
Like the network modelling framework NV~\cite{nv}, \sysname allows users to customize their models
by providing the type of routes (which may involve integers, strings, booleans,
bitvectors, records, optional data, lists, or sets)
and the initialization, transfer and merge functions that process them.
This modelling language makes it easy to add ghost state to routes, as described earlier.
It is also possible to declare and use symbolic values in the model.
Hence, one may reason about all possible prefixes or more generally about
all possible external routing announcements.

Under the hood, \sysname{} uses Microsoft's Zen verification library~\cite{zen-hotnets}
\new{to generate SMT formulas from higher-level C\# data structures
  and pass these formulas to the \zthree{} SMT solver~\cite{z3}.}
\sysname{} thus supports any network model that Zen can encode to \zthree{}.
For example, to model a network running eBGP, we would adopt many of the modelling
choices made in Minesweeper~\cite{minesweeper}.
We can use integers \new{and Presburger arithmetic} to model path length,
and bitvectors to model local preference and MED.

\sysname{} uses multi-threading to run modular checks in parallel.%
\footnote{We use C\#'s Parallel LINQ library~\cite{plinq},
	which can run up to 512 concurrent threads.}
As each check is independent,
the time to set up additional threads is the only overhead
for parallelization.

\section{Evaluation}
\label{sec:evaluation}

\begin{table}
  \centering
  \caption{Lines of C\# code to \new{define the network instances, interfaces and properties} for each benchmark.\\
    \new{$\dagger$: \bte's network instance is defined partly from Internet2's configuration files
      (topology $G$ and $\TT$ functions); the remaining elements ($\init{}$, $\M$ and symbolic variables) are defined in C\#.
    The reported lines are for the C\# portion---the configuration files are over 100,000 lines of Junos configuration code.}}
  \label{fig:interface-loc}
  \begin{tabular}{lccc}
	\toprule
	\textbf{Benchmark} & \new{\textbf{Network LoC}} & \textbf{Interface LoC} & \textbf{Property LoC} \\ \midrule
	\reach & 79 & 3 & 2 \\
	\length & 83 & 7 & 5 \\
	\vf & 87 & 12 & 2 \\
	\hijack & 142 & 21 & 4 \\
	\bte & 83$\dagger$ & 5 & 5 \\
	\bottomrule
  \end{tabular}
\end{table}

To evaluate \sysname{} and illustrate its scaling trends, we generated
a series of synthetic fattree~\cite{fattree} data center networks
and verified four variations on reachability properties.
We also verified an isolation property on a real wide-area network configuration
with over 100,000 lines of \new{Junos configuration} code.
\new{These two types of networks demonstrate \sysname{}'s performance for
  networks which are highly-connected (data centers) and have complex policies
  (WANs).}
Table~\ref{fig:interface-loc} shows that writing the interfaces for each of our
benchmarks is low-effort compared to the rest of the network in terms of lines of code.
We generated interfaces for our experiments parametrically for any size of network,
based on the distinct \emph{roles} of nodes:
for fattree networks, a node's pod and tier determined its role (5 roles, discussed later);
for our wide-area network benchmark, we distinguished internal nodes from external neighbors (2 roles).
\new{As a node's role determined its invariant, it is easy to \emph{extend or update} these networks
  (\EG adding a new pod or external neighbor) and reuse the appropriate existing invariant for
a node of that role.}

To compare our implementation against a baseline,
we implemented a monolithic, network-wide Minesweeper-style~\cite{minesweeper}
procedure \ms{}
and compared its performance against \sysname{}.
\ms{} analyzes stable states, which are independent of time.
\new{Given a property $\PP$ over stable states, \ms{} checks if $\PP$
  always holds (is valid) given the stable states of the network.
  These states are encoded as a single formula over all nodes,
  as described in \S\ref{sec:background}.}
To compare \ms{} with \sysname{}, we first crafted properties
for \sysname{}, which employs timed invariants.
We then erased the temporal details from these invariants
to generate properties that \ms{} could manage.
For instance, when \sysname{} would verify properties of
the form $\globally(\predicate)$, $\finally{t}\globally(\predicate)$, or
$\predicate_2 \until{t}\globally(\predicate)$, \ms{} would instead verify
that the network's stable states satisfy $\predicate$.%
\footnote{\new{None of our properties required that we specify more than one witness time,
    so all $\finally{}$ and $\until{}$ operators took a $\globally$ operator as their temporal argument.}}

We ran our benchmarks on a Microsoft Azure D96s v5 virtual machine with 96 vCPUs and 384GB of RAM.
We used the machine's multi-core processor to run all modular checks in parallel,
while monolithic checks necessarily ran on a single thread.
We timed out any benchmark that did not complete in 2 hours.
We report four times for each benchmark:
\begin{enumerate*}[label=\emph{(\roman*)}]
	\item the total time until all \sysname{} threads finished (\sysshort);
	\item the median node check time;
	\item the $99^{\text{th}}$ percentile node check time (99\% of checks completed in less than this much time); and
	\item the total time taken by \ms.
\end{enumerate*}

\paragraph{Fattrees}
We parameterize our fattree networks by the number of pods $k$: a $k$-fattree has
$1.25 k^{2}$ nodes and $k^{3}$ edges:
Figure~\ref{fig:fat} shows an example 4-fattree used in our $\vf$ benchmark.
We considered multiples of $4$ for $4 \leq k \leq 40$ to assess \sysname{}'s
scalability: whereas we expected a monolithic
verifier to time out on larger topologies, we hypothesized that
\sysname{} would scale to these networks.
We present how verification time grows with respect to the number of nodes
in each fattree in Figure~\ref{fig:smt-fat}.
The figure shows verification time \new{on the y-axis} on a logarithmic scale.

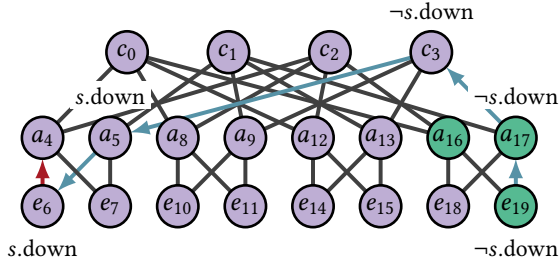
\begin{figure}
	\centering
	\begin{tikzpicture}
	  \SetDistanceScale{0.9}
		\foreach \x [count=\xname from 0] in {-.75,0.75,2.25,3.75} {
				\Vertex[x=\x,y=2.25,label=$c_{\xname}$,size=.55]{C\xname}
			}
		\foreach \xname [count=\x from -2] in {4,5,8,9,12,13} {
				\Vertex[x=\x,y=1,label=$a_{\xname}$,size=.55]{A\xname}
			}
		\Vertex[x=4,y=1,label=$a_{16}$,size=.55,color=nodegreen]{A16}
		\Vertex[x=5,y=1,label=$a_{17}$,size=.55,color=nodegreen]{A17}
		\foreach \xname [count=\x from -2] in {6,7,10,11,14,15,18} {
				\Vertex[x=\x,y=0,label=$e_{\xname}$,size=.55]{E\xname}
			}
		\Vertex[x=5,y=0,label=$e_{19}$,size=.55,color=nodegreen]{E19}
		\foreach \agg in {4,8,12,16} {
				\Edge(C0)(A\agg)
				\Edge(C2)(A\agg)
			}
		\foreach \agg in {9,13} {
				\Edge(C1)(A\agg)
				\Edge(C3)(A\agg)
			}
		\Edge(A5)(C1)
		\Edge(A17)(C1)
		\Edge[Direct,RGB,color={86,146,166}](A17)(C3)
		\Edge[Direct,RGB,color={86,146,166}](C3)(A5)
		\Edge[Direct,color=darkred](E6)(A4)
		\Edge[Direct,RGB,color={86,146,166}](A5)(E6)
		\Edge(E7)(A4)
		\Edge(E7)(A5)
		\Edge(E10)(A8)
		\Edge(E10)(A9)
		\Edge(E11)(A8)
		\Edge(E11)(A9)
		\Edge(E14)(A12)
		\Edge(E14)(A13)
		\Edge(E15)(A12)
		\Edge(E15)(A13)
		\Edge(E18)(A16)
		\Edge(E18)(A17)
		\Edge(E19)(A16)
		\Edge[Direct,RGB,color={86,146,166}](E19)(A17)
		\Text[x=5,y=-.6,fontsize=\small,style={fill=white!8}]{$\neg s.\downtag$}
		\Text[x=5,y=1.62,fontsize=\small,style={fill=white!8}]{$\neg s.\downtag$}
		\Text[x=3.75,y=2.85,fontsize=\small,style={fill=white!8}]{$\neg s.\downtag$}
		\Text[x=-1,y=1.62,fontsize=\small,style={fill=white!8}]{$s.\downtag$}
		\Text[x=-2,y=-.6,fontsize=\small,style={fill=white!8}]{$s.\downtag$}
	\end{tikzpicture}
    \Description{
      A graph representing a 4-pod fattree network with 20 nodes.
      The 4 core (\AKA spine) nodes are labelled $c_{0}, c_{1}, c_{2}$ and $c_{3}$.
      The 8 aggregation nodes are labelled $a_{4}, a_{5}, a_{8}, a_{9}, a_{12}, a_{13}, a_{16}$ and $a_{17}$.
      Finally, the 8 edge (\AKA top-of-rack or ToR) nodes are labelled $e_{6}, e_{7}, e_{10}, e_{11}, e_{14}, e_{15}, e_{18}, e_{19}$.
      There is a route indicated by a blue arrow as departing node $e_{19}$ with the $s.\downtag$ field set to false,
      and which travels up to $a_{17}$ and then $c_{3}$ with the tag still set to false.
      From $c_{3}$, it travels down to $a_{5}$ in a different fattree pod, where the tag is now set to true.
      It then travels from $a_{5}$ down to $e_{6}$.
      From here, a red arrow indicates the route at $e_{6}$ will not be shared up the edge to $a_{4}$, since the tag is set to true.
    }
	\caption{An example fattree network, showing
		how $\vf$ sets $s.\downtag$ along the path between
		the destination node $e_{19}$ and $e_{6}$.
		$e_{6}a_{4}$ will drop the route from $e_{6}$ to prevent valley routing.
	}
	\label{fig:fat}
\end{figure}

\begin{figure*}
	\begin{tikzpicture}
		\begin{groupplot}[
				group style={
						group size=4 by 2,
						horizontal sep=2mm,
						vertical sep=1.1cm,
						xlabels at=edge bottom,
						ylabels at=edge left,
						yticklabels at=edge left,
					},
				enlargelimits=true,
				height=3.6cm,
				width=0.274\linewidth,
				grid=major,
				ymin=0.5,
				ymax=10000,
				ytickten={0,...,4},
				xmin=0,
				xmax=2000,
				xlabel=Nodes,
				ymode=log,
				ylabel=Verification time,
				y unit=\si{\second},
				title style={
						text width=0.3\linewidth,
					},
				cycle list name=cute
			]
			\nextgroupplot[thick,title={\subcaption{\label{fig:spreach}\spreach}},title style={yshift=-0.8em}]
			\foreach \col in {tk,med,99,ms} {
					\addplot table[x expr=\thisrow{k}^2*1.2, y expr=\thisrow{\col}/1000]{\rtbl};
				}
			\addplot[timeout line] {7200.0} node [timeout node] {timeout};
			\nextgroupplot[thick,title={\subcaption{\label{fig:splength}\splength}},title style={yshift=-0.8em}]
			\foreach \col in {tk,med,99,ms} {
					\addplot table[x expr=\thisrow{k}^2*1.2, y expr=\thisrow{\col}/1000]{\ltbl};
				}
			\addplot[timeout line] {7200.0} node [timeout node] {timeout};
			\nextgroupplot[thick,title={\subcaption{\label{fig:vfnet}\vfnet}},title style={yshift=-0.8em}]
			\foreach \col in {tk,med,99,ms} {
					\addplot table[x expr=\thisrow{k}^2*1.2, y expr=\thisrow{\col}/1000]{\vtbl};
				}
			\addplot[timeout line] {7200.0} node [timeout node] {timeout};
			\nextgroupplot[thick,title={\subcaption{\label{fig:hijnet}\hijnet}},title style={yshift=-0.8em},legend style={at={(2.2,1.0)}}]
			\addplot table[x expr=\thisrow{k}^2*1.2, y expr=\thisrow{tk}/1000]{\htbl};
			\addlegendentry{\sysshort}
			\addplot table[x expr=\thisrow{k}^2*1.2, y expr=\thisrow{med}/1000]{\htbl};
			\addlegendentry{\sysshort median}
			\addplot table[x expr=\thisrow{k}^2*1.2, y expr=\thisrow{99}/1000]{\htbl};
			\addlegendentry{\sysshort $99^{th}$ p.}
			\addplot table[x expr=\thisrow{k}^2*1.2, y expr=\thisrow{ms}/1000]{\htbl};
			\addlegendentry{\ms}
			\addplot[timeout line] {7200.0} node [timeout node] {timeout};
			\nextgroupplot[thick,title={\subcaption{\label{fig:apreach}\apreach}},title style={yshift=-0.8em}]
			\foreach \col in {tk,med,99,ms} {
					\addplot table[x expr=\thisrow{k}^2*1.2, y expr=\thisrow{\col}/1000]{\artbl};
				}
			\addplot[timeout line] {7200.0} node [timeout node] {timeout};
			\nextgroupplot[thick,title={\subcaption{\label{fig:aplength}\aplength}},title style={yshift=-0.8em}]
			\foreach \col in {tk,med,99,ms} {
					\addplot table[x expr=\thisrow{k}^2*1.2, y expr=\thisrow{\col}/1000]{\altbl};
				}
			\addplot[timeout line] {7200.0} node [timeout node] {timeout};
			\nextgroupplot[thick,title={\subcaption{\label{fig:apvfnet}\apvfnet}},title style={yshift=-0.8em}]
			\foreach \col in {tk,med,99,ms} {
					\addplot table[x expr=\thisrow{k}^2*1.2, y expr=\thisrow{\col}/1000]{\avtbl};
				}
			\addplot[timeout line] {7200.0} node [timeout node] {timeout};
			\nextgroupplot[thick,title={\subcaption{\label{fig:aphijnet}\aphijnet}},title style={yshift=-0.8em}]
			\addplot table[x expr=\thisrow{k}^2*1.2, y expr=\thisrow{tk}/1000]{\ahtbl};
			\addplot table[x expr=\thisrow{k}^2*1.2, y expr=\thisrow{med}/1000]{\ahtbl};
			\addplot table[x expr=\thisrow{k}^2*1.2, y expr=\thisrow{99}/1000]{\ahtbl};
			\addplot table[x expr=\thisrow{k}^2*1.2, y expr=\thisrow{ms}/1000]{\ahtbl};
			\addplot[timeout line] {7200.0} node [timeout node] {timeout};
		\end{groupplot}
	\end{tikzpicture}
    \Description{
      8 graphs plotting verification times for \sysname{} compared to \ms{}.
      The graphs show the eight fattree benchmarks ---
      the single-destination benchmarks \spreach, \splength, \vfnet, \hijnet and
      the all-pairs edge destination benchmarks \apreach, \aplength, \apvfnet and \aphijnet.
      All the graphs show the number of nodes on the x-axis from 0 to 2000,
      and the y-axis shows verification times on a logarithmic scale from 1 second to
      10000 seconds.
      A timeout line is drawn at 7200 seconds for the 2-hour timeout.
      The graphs show the four measured times: the total time of \sysname{} in blue,
      its median check time in orange, its 99th percentile check time in green,
      and the total time of \ms{} in red.
      For all benchmarks except \spreach, \sysname{} outperforms \ms{}: in 5 of the 8 benchmarks,
      \ms{} times out at the 80 node data point, whereas \sysname{} scales to 2000 nodes in all benchmarks
      except the \apvfnet benchmark.
    }
	\caption{\ms \VS \sysshort verification times for fattree benchmarks with 8 different policies.}
	\label{fig:smt-fat}
\end{figure*}
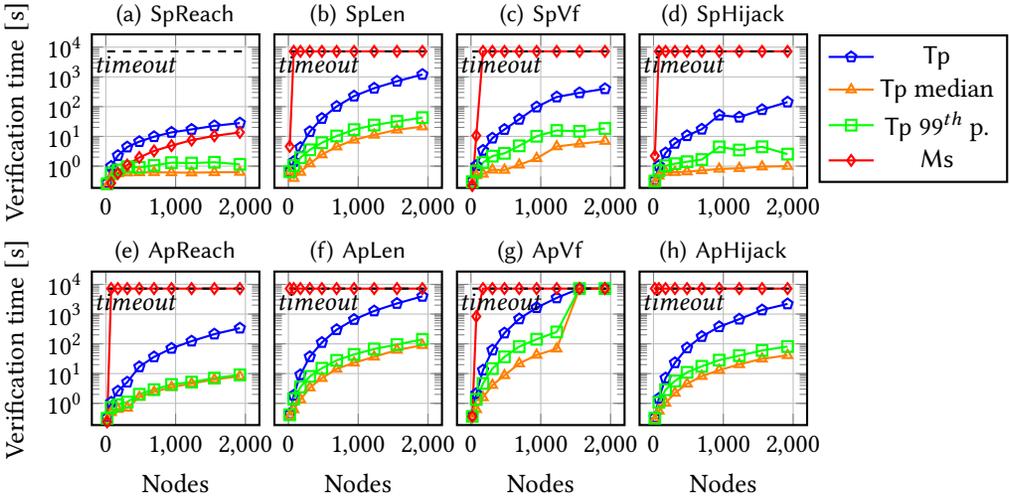

We considered four different properties \new{(explained below)}:
reachability (\reach),
bounded path length (\length), valley freedom (\vf) and route filtering (\hijack).
We considered each property when routing to a single, fixed destination edge node $\dest$ (\singleprefix),
and routing to a $\dest$ edge node \new{determined by a symbolic variable (\allprefix) ---
modelling all-pairs routing to any edge node.}
Our routes modelled the eBGP protocol in these networks.
Table~\ref{fig:route-fields} summarizes the eBGP fields represented
and how we modelled them in SMT.
We model the major common elements of eBGP routing:
a route destination as a 32-bit integer
(representing an IPv4 prefix);
administrative distance, local preference, multi-exit discriminators
as 32-bit integers (encoded as bitvectors);
eBGP origin type as a ternary value;
the AS path length as an (unbounded) integer; and
BGP communities as a set of strings.

\begin{table}
	\centering
	\caption{eBGP route fields modelled by \sysname{} in SMT for fattree benchmarks.}
	\label{fig:route-fields}
	\begin{tabular}{ lc }
		\toprule
		\textbf{Route field}          & \textbf{Modelled type in SMT}                                         \\ \midrule
		Route destination             & bitvector~\cite{smt-bitvector}                                          \\
		Administrative distance       & bitvector~\cite{smt-bitvector}                                          \\
		eBGP local preference         & bitvector~\cite{smt-bitvector}                                          \\
		eBGP multi-exit discriminator & bitvector~\cite{smt-bitvector}                                          \\
		eBGP origin type              & $\{\mathrm{egp}, \mathrm{igp}, \mathrm{unknown}\}$~\cite{smt-bitvector} \\
		eBGP AS path length           & integer~\cite{smt-integer}                                              \\
		eBGP communities              & set<string> ~\cite{smt-string, smt-array}                               \\
		\bottomrule
	\end{tabular}
\end{table}

\paragraph{Witness times}
Our temporal operators include witness times to specify the time when a node $v$ has a route.
To prove that $v$ \emph{eventually} has a route to the destination node $\dest$ at time $t$,
we must show that one of $v$'s neighbors sent it such a route at an earlier time $t - 1$.
The exact time $v$ obtains a route thus depends on where $\dest$ is relative to $v$.
This breaks down to five cases (or \emph{roles}),
following the fattree's structure, depending on whether $v$ is:
\begin{enumerate*}[label=\emph{(\roman*)}]
  \item the $dest$ node (0 hops, has a route at $t = 0$);
  \item an aggregation node in $\dest$'s pod (1 hop, has a route at $t = 1$);
  \item a core node, or an edge node in $\dest$'s pod (2 hops, $t = 2$);
  \item an aggregation node in another pod ($t = 3$); or
  \item an edge node in another pod ($t = 4$).
\end{enumerate*}
\new{The largest choice of $t$ is hence 4, the diameter of the fattree.}
These cases mirror those identified for local data center invariants in~\cite{secguru}.
We use a function $\distance(v)$ to encode these cases for each node $v$.

\paragraph{\reach}
\reach demonstrates the simplest possible routing behavior and serves as a useful baseline.
The policy simply increments the path length of a route on transfer.
We initialized one destination edge node with a route to itself,
and all other nodes with no route ($\nullm{}$).
Our goal is to prove every node eventually has a route to the destination (\IE its route is not $\nullm{}$).
More precisely, because our network has diameter 4, each node $v$ should
acquire a route in 4 time steps.
\begin{equation*}
	\PP_{\reach}(v) \mydef \finally{4}\globally(s \neq \nullm{})
\end{equation*}

Interfaces for these benchmarks mirror the simplicity of the policy and property.
If a node's route $s \neq \nullm{}$ at time $t$, then its neighbors will in turn have
a route $s \neq \nullm{}$ at time $t+1$.
\begin{equation*}
	\An_{\reach}(v) \mydef \finally{\distance(v)}\globally(s \neq \nullm{})
\end{equation*}

\spreach's policy and property are so simple that \sysshort is actually slightly \emph{slower}
than \ms, as shown in Figure~\ref{fig:spreach}.
We conjecture the \ms{} encoding reduces to a particularly easy SAT instance.
That said, we can already see that individual checks in \sysname{}
take only a fraction of the time that \ms takes,
with 99\% of node checks completing in at most 1.1 seconds, even for
our largest benchmarks.
For \apreach,
Figure~\ref{fig:apreach} shows that, perhaps from the burden of modelling the symbolic
$\dest$, \ms times out at $k=8$.
\sysshort verifies our largest benchmark ($k=40$, with 2,000 nodes) in 5.5 minutes, with 99\% of individual node
checks taking under 9 seconds.

\paragraph{\length}
Our next benchmark uses the same policy as \reach, but considers a stronger property:
every node eventually has a route of at most 4 hops to the destination.
\begin{equation*}
	\PP_{\length}(v) \mydef \finally{4}
	\globally\big(s.\len \leq 4\big)
\end{equation*}

To prove this property, our interfaces specify that path lengths in routes should not
exceed the distance to the destination: $s.\len \leq \distance(v)$.
In addition, because local preference influences routing,
we fix the local preference to the default for all routes when present:
$s.\lp = 100$.
\begin{equation*}
  \An_{\length}(v) \mydef
  	\underbrace{\globally \big( s = \nullm{} \vee s.\mathrm{lp} = 100 \big)}_{\textit{no better routes appear}}
	\sqcap \underbrace{\finally{\distance(v)} \globally\big(s.\len \leq \distance(v) \big)}_{\textit{eventually the route appears} }
\end{equation*}

Reasoning over path lengths requires Z3 to use slower bitvector and integer theories.
Figure~\ref{fig:splength} shows that monolithic verification times out at $k=12$ for \splength.
By contrast, modular verification is able to solve $k=40$
in just over 20 minutes, with 99\% of nodes verified in under 43 seconds.
Figure~\ref{fig:aplength} shows that monolithic verification is not even possible for \aplength at $k=4$;
\sysshort{} completes for \aplength $k=40$ in around 66 minutes, with 99\% of nodes verified in 2.4 minutes.

\paragraph{\vf}
\vf extends \reach with policy to prevent up-down-up (valley)
routing~\cite{valleyfreedc,propane,propane-at}, where routes transit an intermediate pod.
To implement this policy, we add a BGP community $D$
along ``down'' edges in the topology
(\IE from a core node or from an aggregation node to an edge node), and drop
routes with $D$ on ``up'' edges
(see Figure~\ref{fig:fat}).
For brevity, we write ``$s.\downtag$'' to mean ``$D \in s.\tags$''.
We test the same reachability property as \reach.

The legitimate routes in the fattree all start as routes travelling up
from the destination node's pod, \EG the nodes in green ($a_{16},a_{17},e_{19}$) in Figure~\ref{fig:fat}.
We refer to these nodes as ``adjacent nodes''
with a shorthand $\adjacent(v)$:
they transmit routes to the core nodes (and thereby to the rest of the network)
along their up edges.
These edges will drop the routes if $s.\downtag$, so we require
that $\adjacent(v) \rightarrow \neg s.\downtag$.
To ensure this, we add conjuncts to our interfaces
requiring that nodes' final routes are no better than the shortest path's route:
$s.\lp = 100 \wedge s.\len = \distance(v)$.
This ensures our inductive condition holds after every node has a route:
otherwise, a core node (for instance) could offer a spurious route with
$s.\len < 1 \wedge s.\downtag$ to an adjacent node.
\begin{equation*}
  \An_{\vf}(v) \mydef s = \nullm{} \until{\distance(v)}
  	\globally\Big( \underbrace{s.\lp {=} 100 \wedge s.\len {=} \distance(v)}_{\textit{no better routes appear}} \wedge
	\underbrace{\big( \adjacent(v) \rightarrow \neg s.\downtag \big)}_{\textit{adjacent nodes will share routes}} \Big)
\end{equation*}

Figure~\ref{fig:vfnet} shows that \ms verifies up to $k=8$ before timing out.
As with \splength, \sysshort time grows gradually
in proportion to the number of nodes, topping out at 6.6 minutes for $k=40$,
with all node checks completing in under 20 seconds.
Figure~\ref{fig:apvfnet} shows that for all-pairs routing,
monolithic verification times out again at $k=12$, whereas
\sysshort hits the 2-hour timeout at $k=36$.
We conjecture this may be due to the added complexity of encoding $\adjacent(v)$
when the destination is symbolic.

\paragraph{\hijack}
\hijack models a fattree with an additional ``hijacker'' node $h$ connected
to the core nodes.
$h$ represents a connection to the Internet from outside the network,
which may advertise \emph{any} route.
We add a boolean ghost state $\bgptag$ to $S$ for this policy
to mark routes as external (from $h$) or internal.
The destination node will advertise a route with $s.\prefix = p$,
where $p$ is a symbolic value representing an internal address:
the core nodes will then drop any routes from $h$ for prefix $p$,
but allow other routes through.
Apart from this filtering, routing functions as in the \reach benchmarks.
For this network, we verified that every internal node eventually has a route
for prefix $p$ and which is not via the hijacker ($\neg s.\bgptag$),
assuming nothing about the hijacker's route ($\An_{\hijack}(h) \mydef \globally(\TR)$).
%
\begin{equation*}
	\PP_{\hijack}(v) \mydef \finally{4}\globally(s.\prefix {=} p \wedge \neg s.\bgptag)
\end{equation*}

The \hijack interface is straightforward.
We must simply re-affirm that nodes with internal prefixes never have external routes:
$s.\prefix = p \rightarrow \neg s.\bgptag$.
Once nodes have received a route from the destination at time $\distance(v)$,
they should keep that route forever, and hence their route
will have both $s.\prefix = p$ and $\neg s.\bgptag$.
\begin{equation*}
  \An_{\hijack}(v) \mydef
  	\underbrace{\finally{\distance(v)}\globally(s.\prefix {=} p \wedge \neg s.\bgptag)}_{\textit{route will be internally reachable}}
  	\sqcap \underbrace{\globally(s.\prefix {=} p \rightarrow \neg s.\bgptag)}_{\textit{no hijack route is ever used}}
\end{equation*}

In \hijnet, monolithic verification times out at $k=8$, whereas
modular verification time scales to $k=40$.
99\% of nodes complete their checks in under 3 seconds,
with our longest check taking 10.7 seconds at $k=40$.
As with our other benchmarks, verification time grows linearly with respect to the in-degree of each node
(which determines the size of the SMT encoding of our inductive condition).
Figure~\ref{fig:aphijnet} shows similar patterns for the all-pairs case,
  with \emph{no} monolithic benchmark completing on time, and modular verification
taking at most 36.6 minutes.

\paragraph{Wide-area networks}
To better investigate \sysname{}'s scalability for other types of networks,
we evaluated it on the Internet2~\cite{internet2} wide-area network.%
  \footnote{We used the versions of Internet2's configuration files available here~\cite{internet2-configs}.}
\new{Internet2's configuration files are one of the few publicly available examples
  of the complex policies of wide-area and cloud provider networks,
  as described in Propane~\cite{propane}.
  These files contain 1,552 Junos routing policies,
  which filter routes by tag or prefix and tag routes according to customer priorities
  (\EG commercial \VS academic peers).}
We converted the configuration files
to \sysname{}'s model by extracting the policy details using Batfish~\cite{batfish}.
The resulting network has \new{over 200 nodes}: 10 internal nodes within Internet2's AS and 253 external peers.
We did not model all components of Internet2's routing policies:
we focused on IPv4 and BGP routing,
and treated some complex behaviors as ``havoc'' (soundly overapproximating the true behavior).%
\footnote{These include prefix matching, community regex matching and AS path matching.
We also did not model BGP nexthop.}
We do not know Internet2's intended routing behavior:
because of this, we cannot be certain that a counterexample found by \sysname{} represents a real
violation of the network's behavior;
nonetheless, we may still use this network to assess how well \sysname{} enables modular verification.


It appears that Internet2 uses a \textbf{BTE} community tag to identify routes that
must not be shared with external neighbors.
We checked that, if the internal nodes initially have any possible route, then
no external neighbor of Internet2 should ever obtain a route with the \textbf{BTE} tag
set, assuming the external neighbors do not start with such routes.

\begin{equation*}
  \PP_{\bte}(v) \mydef
  \begin{cases}
	\globally \big( s \neq \nullm \rightarrow \mathbf{BTE} \notin s.\tags \big) &\text{if $v$ is external} \\
	\globally (\TR) & \text{otherwise}
  \end{cases}
\end{equation*}

Our interface is the property, \IE $\forall v.~\An_{\bte}(v) \mydef \PP_{\bte}(v)$.
Modular checking remains fast despite the network's more complex policies:
on a 6-core Macbook Pro with 16GB of RAM, modular verification completes in 38.3 seconds, with
a median check time of 0.6 seconds and a $99^{\text{th}}$ percentile check time of 4.2 seconds.
Monolithic verification does not complete after 2 hours.

\section{Related Work}
\label{sec:related-work}

Our work is most closely related to other efforts in
control plane verification~\cite{batfish,arc,minesweeper,tiramisu,fastplane,shapeshifter,plankton,hoyan,nv,lightyear,kirigami,bagpipe,bonsai}.
\new{We separate these tools into classes
  based on the specifics of one's verification problem.}

\paragraph{SMT-based verification}
\new{For networks which are small (in the tens of nodes),
SMT-based tools such as Minesweeper~\cite{minesweeper} or Bagpipe~\cite{bagpipe}
offer ease-of-use, generality, and symbolic reasoning.}
Minesweeper supports a broad range of properties including reachability, waypointing, no blackholes and loops, and device equivalence.

\paragraph{Simulation-based verification}
\new{For larger networks that do not necessitate incremental recomputation after device update
nor fully symbolic reasoning, one can use simulation-based tools~\cite{batfish,fastplane,shapeshifter,plankton,hoyan,nv}.}
Some of these tools also employ symbolic reasoning in limited ways to provide useful capabilities.
For example, inspired by effective work on data plane analysis~\cite{veriflow},
Plankton~\cite{plankton} first analyzes configurations to identify IP prefix equivalence classes.
Identified equivalence classes may be treated symbolically in the rest of the computation.
Plankton might be able to reason symbolically about other attributes,
but doing so would require additional custom engineering
to find the appropriate sort of equivalence class ahead of time (\EG for BGP AS paths or communities).
\sysname{} \new{can represent any or all route fields symbolically (\EG for external actors),
  or determine the behavior of $\init{}$, $\TT$ and $\M$ functions using symbolic variables (\EG for all-pairs properties).}
The solver effort is then passed off to the underlying SMT engine.

\paragraph{Scalable SMT-based verification}
\new{Fewer tools exist for networks which are large and where symbolic reasoning is important.}
Bonsai exploits symmetry to derive smaller abstractions of a network~\cite{bonsai},
but does not work if the network has topology or policy asymmetries or one considers
failures (which break topological symmetries).
Other tools exploit modularity in network designs.
Kirigami~\cite{kirigami} verifies networks using assume-guarantee reasoning, 
but requires interfaces to specify the exact routes passed between any two components.
As such, it is impossible to craft interfaces that are robust to minor changes in network policies.
\new{Lightyear~\cite{lightyear} allows users to craft more general interfaces, but can only prove
properties that capture the \emph{absence} of some ``bad'' route (\EG our \bte{} property).
Lightyear does not model route interaction (\EG selecting routes by path length),
which we do via our $\M$ function: Lightyear's verification queries are therefore smaller,
as it can check invariants on every edge independently.}
We conjecture (but have not proven) that Lightyear verifies properties that \sysname{} expresses as
$\globally (\predicate)$, but not properties requiring $\until{t}$ or $\finally{t}$ temporal operators.

\paragraph{Other efforts in network analysis}
Daggitt \ETAL also use a timed model~\cite{daggitt2018asynchronous},
but focus on convergence properties of routing protocols.
We analyze properties that depend upon a network's topology and configuration such as reachability.

Other inspirations for our work are SecGuru and RCDC~\cite{secguru} in
data plane verification.
Unlike our work, they use non-temporal invariants, which
they extract from the network topology
and assume as ground truth for policies on individual devices.
Whereas our experiments likewise used the topology to define local invariants,
our verification procedure checks that these invariants are in fact guaranteed by the
other devices in the network.

\paragraph{Compositional reasoning}
Our work is inspired by the success of automated methods for compositional verification of concurrent systems -- a recent handbook chapter~\cite{giannakopoulou2018compositional} provides many useful pointers to the rich literature on this topic.
Automated methods using compositional reasoning have been successfully applied in many application domains -- concurrent programs (\EG~\cite{OwickiG76,flanagan2003thread,threaderCav11}), hardware designs (\EG~\cite{Kurshan88,McMillan97,HenzIccad00}, reactive systems (\EG~\cite{alur1999reactive}) --- and for a range of properties including safety and liveness, as well as for refinement checking.
\cite{lomuscio2010assume} applies assume-guarantee reasoning for verifying stability of network congestion control systems.
Many such applications use temporal logic for specifying properties, as well as assumptions and guarantees at component interfaces~\cite{pnueli-modular}.
One main challenge is to come up with suitable assumptions that are strong enough to prove the properties of interest.
Toward this goal, \sysname uses a language of temporal invariants 
inspired by temporal logic to support checking 
local (\IE per-router) properties. 
However, it carefully limits the expressiveness of this language, \EG by not allowing \new{arbitrary} nesting of temporal operators, while allowing efficient verification of the proof obligations using SMT solvers.

Another main difference is that unlike most existing methods, \sysname uses time as an \emph{explicit} variable $t$ in the language of invariants.
This serves two distinct but related purposes.
First, $t$ provides a well-founded ordering to ensure that our proof rule is sound.
Second, using $t$ explicitly in the language of invariants avoids choosing some static ordering over the components, which could be otherwise used to break a circular chain of dependencies between their assumptions
(\CF the \textsc{Circ} rule in~\cite{giannakopoulou2018compositional}).
Unfortunately, it is not always possible to determine a static ordering,
especially in cases where we consider multiple destinations at once symbolically.

Others have used induction over time~\cite{Misra81} or over traces in specific models such as compositions of Moore/Mealy machines~\cite{McMillan97,HenzToplas02} and reactive modules~\cite{alur1999reactive,HenzIccad00} to prove soundness of circular assume-guarantee proof rules.
However, to the best of our knowledge, no prior efforts use time explicitly in the language of assumptions.
Although handling time explicitly could be more costly for decision procedures,
in practice we use abstractions (via temporal operators) that result in fairly compact formulas.
Our evaluations show that these formulas can be handled well by modern SMT solvers.

There have also been many efforts that \emph{automatically} derive assumptions for compositional reasoning~\cite{giannakopoulou2018compositional}.
Representative techniques include computing fixed points over localized assertions called \emph{split} invariants~\cite{NamjoshiCav07}, learning-based methods~\cite{Cobleigh03}, and counterexample-guided abstraction refinement~\cite{PasareanuCav08,GrumbergFac18}.
We can view our interfaces as split invariants, since they refer to only the local state of a component.
However, we depend on the users to provide them as annotations.

\paragraph{Modular verification of distributed systems}
There have been many prior efforts~\cite{iris,TatlockPopl18,ironfleet,I4,DistAI,ivy,modp} for modular verification of distributed systems -- see a recent work~\cite{TatlockPopl18} for other useful pointers.
In general, they handle much richer program logics or computational models than the network routing algebras we target; hence the required assumptions
and verification tasks are more complex, and often require interactive theorem-proving.
The synchronous semantics of network routing algebras~\cite{daggitt2018asynchronous} that
underlies our work is more closely related
to hardware designs modelled as compositions of finite state machines (FSMs),
where a component FSM's state at time $t+1$ depends on its state at time $t$ and new inputs at time $t+1$,
some of which could be outputs from other FSM components, \IE their state at time $t$.
No existing efforts for such models (\EG~\cite{McMillan97,HenzToplas02}) consider time explicitly in the assumptions.

\section{Conclusion}
\label{sec:conclusion}

Ensuring correct routing is critical to the operation of reliable networks.
To verify today's hyperscale networks,
we need modular control plane verification techniques that are general, expressive and efficient.
We propose \sysname{}, a radical new approach for verifying control planes based on
a temporal foundation, which splits the network into small modules to verify efficiently in parallel.
To carry out verification, users provide \sysname{} with local interfaces using temporal operators.
We proved that \sysname{} is sound with respect to the network semantics
\new{and complete for closed networks},
and argue that its temporal foundation is an excellent choice for modular verification.

\begin{acks}
We thank our anonymous reviewers and our shepherd, Jedidiah McClurg, for their helpful feedback.
This work was supported in part by grants from the \grantsponsor{npi}{Network Programming Initiative}{https://www.network-programming.org/}
and the \grantsponsor{nsf}{National Science Foundation}{https://www.nsf.org/}:
\grantnum{nsf}{1837030},
\grantnum{nsf}{2107138}.
\end{acks}

\section*{Availability}
\sysname{} is publicly available via GitHub~\cite{github-repo} as well as Zenodo~\cite{zenodo-repo}.
These contain the necessary code to run the benchmarks in our evaluation or write new ones in C\#.
We include a Docker image and Makefile to assist in running the benchmarks.

\bibliographystyle{ACM-Reference-Format}
\bibliography{references,ref-compo}

\iftoggle{appendix}{%
  \appendix
  \section{Proofs}
  \label{sec:proofs}

  We present the full proofs of Theorem~\ref{thm:sound}, Corollary~\ref{thm:sound-cor}
  and Theorem~\ref{thm:complete} below.
  \new{For clarity, we use boxes, \EG $\boxed{x}$, to indicate where we substitute $x$ for another term.}

  \begin{theorem}[Soundness]\label{thm:sound-apx}
    Let $\An$ satisfy the initial and inductive conditions \new{for all nodes}.
    Then $\An$ always includes the simulation state $\st$, meaning
    $\forall v \in V, \forall t \in \nat, \st(v)(t) \in \Anno{v}{t}$.
  \end{theorem}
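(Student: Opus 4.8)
The plan is a routine induction on the time index $t$, following the inductive shape of the verification conditions in Figure~\ref{fig:notation-2}. For the base case $t = 0$, equation \eqref{eq:init-st} gives $\st(v)(0) = \init{v}$, and the initial condition \eqref{eq:init-vc}, assumed to hold at every node, gives $\init{v} \in \Anno{v}{0}$; hence $\st(v)(0) \in \Anno{v}{0}$ for all $v$.

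For the inductive step I would fix $t$ and take as the induction hypothesis that $\st(u)(t) \in \Anno{u}{t}$ for \emph{every} node $u$ --- this global form of the hypothesis is exactly what the step needs, since $v$'s update at time $t+1$ reads the states of all of its in-neighbors at time $t$. Writing $\preds(v) = \{u_1, \dots, u_n\}$, equation \eqref{eq:next-st} unfolds $\st(v)(t+1)$ to $\init{v} \M \MM_{i \in \{1,\dots,n\}} \T_{u_{i}v}(\st(u_i)(t))$. The inductive condition \eqref{eq:ind-vc} for $v$ is universally quantified over the time and over the routes $s_i \in \Anno{u_i}{t}$; I would instantiate it at this $t$ with each $s_i$ taken to be $\st(u_i)(t)$, which is an admissible instantiation precisely because the induction hypothesis guarantees $\st(u_i)(t) \in \Anno{u_i}{t}$ for every $i$. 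The conclusion of \eqref{eq:ind-vc} under this instantiation is $\bigl(\init{v} \M \MM_{i} \T_{u_{i}v}(\st(u_i)(t))\bigr) \in \Anno{v}{t+1}$, which is exactly $\st(v)(t+1) \in \Anno{v}{t+1}$. Since $v$ was arbitrary, the inductive step holds for all nodes, completing the induction on $t$.

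I do not expect a substantive obstacle: the only point requiring a moment's care is that the merge expression defining $\st(v)(t+1)$ in \eqref{eq:next-st} must be matched with the merge expression in the conclusion of \eqref{eq:ind-vc}. These agree up to reassociation and reordering of the arguments of $\M$, which is licensed by the assumed associativity and commutativity of $\M$, provided the index set $\{u_1,\dots,u_n\}$ used in the VC is taken to be exactly $\preds(v)$. Beyond this bookkeeping, nothing more than ordinary induction on $\nat$ is needed --- no well-foundedness argument and no auxiliary invariant.
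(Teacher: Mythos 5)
Your proof matches the paper's own argument: a base case from the initial condition, and an inductive step that uses the global induction hypothesis to instantiate the universally quantified inductive VC at each in-neighbor with $\st(u_i)(t)$. The extra remark about reassociating the $\M$-expression under associativity/commutativity is a small piece of bookkeeping the paper leaves implicit, but the approach is the same.
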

  \begin{proof}
    Let $v$ be an arbitrary node in $V$.
    The proof is straightforward by induction over time.

    At time 0, we have $\st(v)(0) = \init{v}$ by the definition of $\st$, and
    $\init{v} \in \Anno{v}{0}$ since $\An$ satisfies the initial condition, so by substitution
    we have $\st(v)(0) \in \Anno{v}{0}$.

    For the inductive case, we must show that
    \begin{equation*}
      \big(\forall v \in V,~\st(v)(t) \in \Anno{v}{t}\big) \Rightarrow
      \big(\forall v \in V,~\st(v)(t+1) \in \Anno{v}{t+1}\big)
    \end{equation*}

    We assume the antecedent (the inductive hypothesis).
    Consider the in-neighbors $u_{1},\ldots,u_{n}$ of $v$.
    Since $\An$ satisfies the inductive condition~\eqref{eq:ind-vc},
    we have:
    \begin{align*}
      & \forall s_{1} \in \Anno{u_{1}}{t}, \forall s_{2} \in \Anno{u_{2}}{t}, \ldots, \forall s_{n} \in \Anno{u_{n}}{t},
      & \left(\init{v} \M \MM_{i \in \{1,\ldots,n\}} \T_{u_{i}v}(s_{i})\right) \in \Anno{v}{t+1}
    \end{align*}
    Then by our inductive hypothesis, we have that $\st(u_{i})(t) \in \Anno{u_{i}}{t}$ for all $u_{i}$,
    so we can instantiate the universal quantifiers with these routes
    and substitute, which gives us:
    \begin{equation*}
      \init{v} \M \MM_{i \in \{1,\ldots,n\}} \T_{u_{i}v}(\boxed{\st(u_{i})(t)}) \in \Anno{v}{t+1}
    \end{equation*}
    The left-hand side is equal to the
    definition of $\st(v)(t+1)$ in~\eqref{eq:next-st},
    so we have $\boxed{\st(v)(t+1)} \in \Anno{v}{t+1}$.
    Then $\forall t \in \nat, \forall v \in V, \st(v)(t) \in \Anno{v}{t}$.
    \end{proof}

  \begin{corollary}[Safety]\label{thm:sound-cor-apx}
    Let $\An$ satisfy initial and inductive conditions for all nodes.
    Let P satisfy the safety condition with respect to A for all nodes.
    Then $\forall v \in V, \forall t \in \nat, \st(v)(t) \in \PPr{v}{t}$.
  \end{corollary}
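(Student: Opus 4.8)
The plan is to reduce the claim immediately to Theorem~\ref{thm:sound} (Soundness) together with the safety condition~\eqref{eq:safety-vc}; there is no need to redo an induction on time. First I would fix an arbitrary node $v \in V$ and an arbitrary time $t \in \nat$. Since $\An$ satisfies the initial and inductive conditions for all nodes, Theorem~\ref{thm:sound} applies and gives $\st(v)(t) \in \Anno{v}{t}$.

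Next I would invoke the hypothesis that $\PP$ satisfies the safety condition with respect to $\An$ for all nodes: instantiating~\eqref{eq:safety-vc} at this particular $v$ and $t$ yields $\Anno{v}{t} \subseteq \PPr{v}{t}$. Chaining the two facts, membership $\st(v)(t) \in \Anno{v}{t}$ and the inclusion $\Anno{v}{t} \subseteq \PPr{v}{t}$, gives $\st(v)(t) \in \PPr{v}{t}$. Since $v \in V$ and $t \in \nat$ were arbitrary, this establishes $\forall v \in V, \forall t \in \nat,\ \st(v)(t) \in \PPr{v}{t}$, as desired.

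There is essentially no obstacle here: the result is a one-line consequence of the previously proved soundness theorem and the definition of the safety condition. The only point requiring a little care is that both hypotheses are quantified over \emph{all} nodes rather than just $v$ --- soundness in particular relies on the inductive condition holding network-wide, not merely at $v$ --- but Theorem~\ref{thm:sound} already delivers the universally quantified conclusion $\forall v,t.\ \st(v)(t)\in\Anno{v}{t}$, so the per-node instantiation used above is justified without further work.
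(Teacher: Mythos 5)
Your proof is correct and follows essentially the same route as the paper's: fix arbitrary $v$ and $t$, invoke Theorem~\ref{thm:sound} to get $\st(v)(t) \in \Anno{v}{t}$, apply the safety condition~\eqref{eq:safety-vc} to get $\Anno{v}{t} \subseteq \PPr{v}{t}$, and chain the two. The extra remark about the hypotheses being quantified over all nodes is accurate but, as you note yourself, adds nothing beyond what the statement of Theorem~\ref{thm:sound} already provides.
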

  \begin{proof}
    Let $v$ be a node and $t$ be a time.
    By Theorem~\ref{thm:sound}, $\st(v)(t) \in \Anno{v}{t}$.
    By the safety condition~\eqref{eq:safety-vc}, $\Anno{v}{t} \subseteq \PPr{v}{t}$.
    Then $\st(v)(t) \in \PPr{v}{t}$, \IE $\PP$ holds for $v$ at $t$.
  \end{proof}

  \begin{theorem}[Closed Simulation Completeness]\label{thm:complete-apx}
    Let $\st$ be the state of the \new{closed} network.
    Then for all $v \in V$ and $t \in \nat$, $\Anno{v}{t} = \{\st(v)(t)\}$
    satisfies the initial and inductive conditions \new{for all nodes}.
  \end{theorem}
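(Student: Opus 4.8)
The plan is to verify directly that the proposed interface $\An$, defined by $\Anno{v}{t} = \{\st(v)(t)\}$ for every node $v \in V$ and time $t \in \nat$, satisfies the initial condition~\eqref{eq:init-vc} and the inductive condition~\eqref{eq:ind-vc} at every node. The feature that makes this clean is the \emph{closedness} hypothesis: since every initial route is a fixed route, the recurrence~\eqref{eq:init-st}--\eqref{eq:next-st} determines $\st$ as a single well-defined function, so each $\Anno{v}{t}$ really is a singleton and the interface is well-formed. No induction over time is needed here, unlike in Theorem~\ref{thm:sound}: $\st$ is already the object the conditions talk about, so we are essentially checking that the recurrence defining $\st$ is an instance of the verification conditions.

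For the initial condition, I would simply observe that $\Anno{v}{0} = \{\st(v)(0)\}$ and that $\st(v)(0) = \init{v}$ by~\eqref{eq:init-st}, so $\init{v} \in \Anno{v}{0}$ trivially. For the inductive condition, fix a node $v$ with in-neighbors $u_{1},\ldots,u_{n}$ and a time $t$. Because each $\Anno{u_{i}}{t} = \{\st(u_{i})(t)\}$ is a singleton, the universal quantifiers over $s_{1} \in \Anno{u_{1}}{t}, \ldots, s_{n} \in \Anno{u_{n}}{t}$ have exactly one instantiation, namely $s_{i} = \st(u_{i})(t)$ for each $i$. Substituting these into the left-hand side of~\eqref{eq:ind-vc} produces $\bigl(\init{v} \M \MM_{i\in\{1,\ldots,n\}} \T_{u_{i}v}(\st(u_{i})(t))\bigr)$, which is precisely $\st(v)(t+1)$ by~\eqref{eq:next-st}. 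Since $\Anno{v}{t+1} = \{\st(v)(t+1)\}$, the required membership holds, and~\eqref{eq:ind-vc} is satisfied at $v$. Quantifying over all $v$ completes the argument.

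I do not expect a real obstacle; the only point that warrants explicit care is articulating \emph{why} closedness is required — in an open network an external node may emit an arbitrary route at each step, so $\st$ is not a single function and the singleton interface would fail to overapproximate every behavior (this matches the footnote in the excerpt). It is also worth noting, as the intended payoff, that combining this theorem with Corollary~\ref{thm:sound-cor} (instantiated with $\PP = \An$) shows the modular procedure can prove any property that holds of the concrete simulation of a closed network, which is the sense in which the procedure is complete.
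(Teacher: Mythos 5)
Your proof is correct and follows essentially the same route as the paper's: check the initial condition directly from~\eqref{eq:init-st}, then observe that the singleton interfaces collapse the universal quantifiers in~\eqref{eq:ind-vc} to a single instantiation that reproduces~\eqref{eq:next-st}. The extra remarks on the role of closedness and on composing with Corollary~\ref{thm:sound-cor} are accurate framing but not part of the paper's proof.
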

  \begin{proof}
    Let $\An$ be a function 
      such that
    $\Anno{v}{t} = \{\st(v)(t)\}$ for all $v \in V$ and all $t \in \nat$.
    Let $v$ be an arbitrary node and $t$ be an arbitrary time.
    We want to show that $\{\st(v)(t)\}$ is an inductive invariant.

    Starting with the initial condition case,
    at time 0, we have $\st(v)(0) = \init{v}$ by the definition of $\st$.
    Since $\st(v)(0) \in \{\st(v)(0)\}$,
    the initial condition holds.

    For the inductive condition case,
    we want to show that:
      \begin{align*}
      & \forall s_{1} \in \boxed{\{\st(u_{1})(t)\}}, \forall s_{2} \in \boxed{\{\st(u_{2})(t)\}}, \ldots, \forall s_{n} \in \boxed{\{\st(u_{n})(t)\}}, \\
      & \left(\init{v} \M \MM_{i \in \{1,\ldots,n\}} \T_{u_{i}v}(s_{i})\right)
      \in \boxed{\{\st(v)(t+1)\}}
      \end{align*}
    We can enumerate the singleton sets (\IE substitute each $\st(u_{i})(t)$ for $s_{i}$) to give:
    \begin{align*}
		 & \left(\init{v} \M \MM_{i \in \{1,\ldots,n\}} \T_{u_{i}v}(\boxed{\st(u_{i})(t)})\right)
      \boxed{= \st(v)(t+1)}
    \end{align*}
    This is now simply~\eqref{eq:next-st} flipped,
    so the inductive condition holds.

    Then $\An$ satisfies the initial and inductive conditions for all nodes.
  \end{proof}
}{}

\end{document}